\newtheorem{theorem}{Theorem}[section]
\newtheorem{lemma}[theorem]{Lemma}
\newtheorem{corollary}[theorem]{Corollary}
\newtheorem{proposition}[theorem]{Proposition}
\newlength{\algindent}
\newenvironment{AlgorithmSteps}{%
  \setlength{\algindent}{0mm}
  \small
  \begin{enumerate}[\bf\small 1.]%
  \setlength{\itemsep}{3pt} 
}{%
  \end{enumerate}
}
\newcommand{\Step}[1]{\item\hspace*{\algindent}\parbox[t]{\textwidth-\algindent-\leftmargin}{#1}}
\newcommand{\Comment}[1]{\item[]\hspace*{\algindent}\parbox[t]{\textwidth-\algindent-\leftmargin}{\sl\{#1\}}}
\newcommand{\IncreaseIndent}{\addtolength{\algindent}{7mm}}
\newcommand{\DecreaseIndent}{\addtolength{\algindent}{-7mm}}
\newcommand{\MAX}{\mathrm{MAX}}
\newcommand{\MIN}{\mathrm{MIN}}
\newenvironment{AlgorithmBlock}{\IncreaseIndent}{\DecreaseIndent}
\newcommand{\Input}[1]{\par\noindent\textbf{Input:} #1}
\newcommand{\Output}[1]{\par\noindent\textbf{Output:} #1}
\newcommand{\MS}{\ensuremath{\mathcal{S}}}
\begin{document}

  \title{Hereditary biclique-Helly graphs: recognition and maximal biclique enumeration}

  \author{%
    Martiniano Egu\'\i a \and Francisco J.\ Soulignac
  }

  \date{\normalsize Universidad de Buenos Aires, Facultad de Ciencias Exactas y Naturales, Departamento  de  Computaci\'on, Buenos Aires, Argentina.\\\texttt{\{meguia, fsoulign\}@dc.uba.ar}}

\maketitle   

\begin{abstract}
A biclique is a set of vertices that induce a bipartite complete graph.  A graph $G$ is biclique-Helly when its family of maximal bicliques satisfies the Helly property.  If every induced subgraph of $G$ is also biclique-Helly, then $G$ is hereditary biclique-Helly.  A graph is $C_4$-dominated when every cycle of length $4$ contains a vertex that is dominated by the vertex of the cycle that is not adjacent to it.  In this paper we show that the class of hereditary biclique-Helly graphs is formed precisely by those $C_4$-dominated graphs that contain no triangles and no induced cycles of length either $5$, or $6$.  Using this characterization, we develop an algorithm for recognizing hereditary biclique-Helly graphs in $O(n^2+\alpha m)$ time and $O(m)$ space.  (Here $n$, $m$, and $\alpha = O(m^{1/2})$ are the number of vertices and edges, and the arboricity of the graph, respectively.)  As a subprocedure, we show how to recognize those $C_4$-dominated graphs that contain no triangles in $O(\alpha m)$ time and $O(m)$ space.  Finally, we show how to enumerate all the maximal bicliques of a $C_4$-dominated graph with no triangles in $O(n^2 + \alpha m)$ time and $O(\alpha m)$ space, and we discuss how some biclique problems can be solved in $O(\alpha m)$ time and $O(n+m)$ space.  

\vspace*{3mm} {\bf Keywords:} hereditary biclique-Helly graphs, maximal bicliques, triangle-free graphs, domination problems.

\end{abstract}

\section{Introduction}

A famous theorem by Helly states that, in a $d$-dimensional euclidean space, if in a finite collection of $n > d$ convex sets any $d + 1$ sets have a point in common, then there is a point in common to all the sets~\cite{HellyJDMV1923}.  The Helly property generalizes this theorem for families of sets of any kind.  A family of sets satisfies the Helly property, or simply is a Helly family, if for every subfamily $\mathcal{F}$ of pairwise intersecting sets there is an element common to all the sets in $\mathcal{F}$.

The Helly property arises naturally in the graph theory field~\cite{BrandstadtLeSpinrad1999,Golumbic2004,McKeeMcMorris1999}.  In the study of clique graphs, the Helly property plays a central role.  Roberts and Spencer proved that a graph is a clique graph if and only if there is a Helly family of cliques that covers all the edges of the graph~\cite{RobertsSpencerJCTSB1971} (see also~\cite{HamelinkJCT1968}).  Based on this result, it is interesting to study the subclass of clique graphs in which the family of maximal cliques is Helly.  Such is the class of clique-Helly graphs.  Szwarcfiter presented a characterization of clique-Helly graphs that yields a polynomial time algorithm for the associated recognition problem~\cite{SzwarcfiterAC1997}, while Lin and Szwarcfiter recently developed an $O(m^2)$ time recognition algorithm~\cite{LinSzwarcfiterIPL2007}.  On the other hand, Alcón et al.\ showed that the recognition of clique graphs is an NP-complete problem~\cite{Alc'onFariaFigueiredoGutierrezTCS2009}.  The Helly property has been applied in a similar fashion to other families of vertex sets so as to define several other classes of graphs. A survey on the Helly property for graphs, mainly from a complexity point of view, is given in~\cite{DouradoProttiSzwarcfiterEJC2009}.

A clique-Helly graph can be obtained from any given graph, by inserting a new vertex adjacent to all the existing vertices.  As a consequence, the class of clique-Helly graphs is not closed under induced subgraphs.  It makes sense to study those graphs whose all their induced subgraphs are clique-Helly.  These graphs are known as the hereditary clique-Helly graphs.  Prisner gave a characterization of hereditary clique-Helly graphs by means of four forbidden induced subgraphs with $6$ vertices each.  He also showed an $O(n^2m)$ time algorithm for the associated recognition problem~\cite{PrisnerJCMCC1993}.  Later, Lin and Szwarcfiter developed an improved $O(m^2)$ time and $O(\alpha m)$ space recognition algorithm~\cite{LinSzwarcfiterIPL2007}, where $\alpha < \sqrt{m}$ is the arboricity of the graph.  

The problem of enumerating all the maximal cliques of a graph is widely studied, both for the general case (e.g.~\cite{ChibaNishizekiSJC1985,JohnsonYannakakisPapadimitriouIPL1988,TsukiyamaIdeAriyoshiShirakawaSJC1977,TomitaTanakaTakahashiTCS2006}) and for some restricted graph classes (see e.g.~\cite{BrandstadtLeSpinrad1999,Golumbic2004}).  Since a clique-Helly graph can be obtained from any graph by inserting a universal vertex, any algorithm for the enumeration of the maximal cliques of a clique-Helly graph can also be used to enumerate all the maximal cliques of a general graph.  That is, the best algorithms for enumerating the maximal cliques of clique-Helly graphs are of general purpose.  For hereditary clique-Helly graphs the situation is quite different, since clique-Helly graphs have at most $m$ maximal cliques~\cite{PrisnerJCMCC1993}.  This $O(m)$ bound follows from the fact that all the maximal cliques of a hereditary clique-Helly graph must have an edge that belongs to no other maximal clique~\cite{PrisnerJCMCC1993,WallisZhangJCMCC1990}.  So, using the same ideas as in~\cite{LinSzwarcfiterIPL2007}, an $O(m^2)$ time algorithm for enumerating all the maximal cliques yields from this property.  

In this paper we focus our attention on bicliques.  A biclique is a set of vertices inducing a bipartite complete subgraph.  The problem of enumerating all the (non-induced) maximal bicliques of a general graph is also widely studied (e.g.~\cite{AlexeAlexeCramaFoldesHammerSimeoneDAM2004,Binkele-RaibleFernauGaspersLiedloffIPL2010,DiasFigueiredoSzwarcfiterTCS2005,DiasFigueiredoSzwarcfiterDAM2007,G'elyNourineSadiDAM2009}), as well as it is the problem of generating the maximal bicliques of a bipartite graph~\cite{MakinoUno2004,NourineRaynaudIPL1999}.  In the last years, other concepts that are widely studied for cliques have been studied in terms of bicliques~\cite{Groshaus2006,GroshausMontero2008,GroshausMontero2009,GroshausSzwarcfiterDMTCS2008,GroshausSzwarcfiterGC2007,GroshausSzwarcfiterJGT2010,Montero2008,Terlisky2010}. Groshaus and Szwarcfiter found a characterization of biclique graphs that somehow resembles the characterization of clique graphs by Roberts and Spencer~\cite{GroshausSzwarcfiterJGT2010}; in this case, a key ingredient of the characterization is a variation of the Helly property, which the authors call the bipartite Helly property.  Groshaus and Szwarcfiter also provided a characterization of biclique-Helly graphs, that is somehow related to the characterization given by Szwarcfiter for clique-Helly graphs, that also leads to a polynomial time algorithm for the associated recognition problem~\cite{GroshausSzwarcfiterGC2007}.  

As clique-Helly graphs, the induced subgraph of a biclique-Helly graph needs not be a biclique-Helly graph.  Indeed, by inserting a vertex adjacent to all the vertices in one bipartition of a bipartite graph, a biclique-Helly graph is obtained.  A graph is hereditary biclique-Helly graphs if all its induced subgraphs are biclique-Helly.  Groshaus and Szwarcfiter also studied the class of hereditary biclique-Helly graphs~\cite{GroshausSzwarcfiterDMTCS2008}.  They showed a family of six forbidden induced subgraphs with at most $8$ vertices.  As a corollary, the recognition of hereditary biclique-Helly graphs takes polynomial time, though the most efficient algorithm to this date takes $O(n^3m^2)$ time (cf.~\cite{DouradoProttiSzwarcfiterEJC2009}).

Prisner proved that bipartite graphs can have an exponential number of bicliques~\cite{PrisnerC2000}.  Since a biclique-Helly graph can be obtained from any bipartite graph by the insertion of one vertex, biclique-Helly graphs can have an exponential number of maximal bicliques.  Furthermore, as for clique-Helly graphs, the best algorithms for listing all the maximal bicliques of a biclique-Helly graph are of general purpose.  

In this paper, we consider two problems related with hereditary biclique-Helly graphs: their recognition and the enumeration of their maximal bicliques.  For the recognition problem, we rephrase the characterization by Groshaus and Szwarcfiter in more algorithmic terms and, using this new characterization, we develop an $O(\alpha m + n^2)$ time and $O(m)$ space recognition algorithm.  As one of the steps in our algorithm, we require the recognition of a larger class, formed by all the triangle-free graphs whose $C_4$'s have at least one vertex dominated by other vertex of the cycle.  We call this class, the class of $C_4$-dominated graphs with no triangles.  We develop a recognition algorithm for this class that takes $O(\alpha m)$ time and $O(m)$ space. For the enumeration problem, we develop an $O(\alpha m+o)$ time and space algorithm that outputs all the bicliques of any triangle-free $C_4$-dominated graph, where $o < n^2$ is the size of the output.  Furthermore, we prove that every maximal biclique of a graph in this class is formed by those vertices that either are adjacent or dominate $v$, for some vertex $v$.  As a result, hereditary biclique-Helly graphs can have at most $n$ maximal bicliques.

The article is organized as follows.  In the next section we introduce the notation and terminology employed. In Section~\ref{sec:simple recognition} we develop a simple $O(nm)$ time and $O(n^2)$ space algorithm for the recognition of biclique-Helly graphs.  Following, in Sections \ref{sec:C4 fast}~and~\ref{sec:C6 fast}, we present an improved implementation of this simple algorithm, so that it runs in $O(\alpha m + n^2)$ time and $O(m)$ space.  The algorithm for enumerating the maximal bicliques of $C_4$-dominated graph with no triangles is given in Section~\ref{sec:bicliques}. Finally, in Section~\ref{sec:remarks}, we give some remarks and leave some open problems.

\section{Preliminaries}
\label{sec:preliminaries}

In this paper we work with simple graphs.  Let $G$ be a graph with vertex set $V(G)$ and edge set $E(G)$, and call $n = |V(G)|$ and $m = |E(G)|$.  Write $vw$ to denote the edge of $G$ formed by vertices $v, w \in V(G)$.  For $v \in V(G)$, represent by $N_G(v)$ the set of vertices adjacent to $v$. The set $N_G(v)$ is called the \emph{neighborhood} of $v$, and $d_G(v) = |N(v)|$ is the \emph{degree of $v$}.  Similarly, for $v,w \in V(G)$, define the \emph{common neighborhood} of $v$ and $w$ as $N_G(vw) = N_G(v) \cap N_G(w)$.  Say that $v$ \emph{dominates} $w$, or equivalently that $w$ is \emph{dominated} by $v$, when $N(w) \subseteq N(v)$.  Note that $v$ dominates $w$ only if $v$ is not adjacent to $w$.  The set of vertices that dominate $v$ is represented by $Dom_G(v)$.  Say that $v$ is \emph{dom-comparable} to $w$ when $v$ either dominates or is dominated by $w$.  When there is no ambiguity, we may omit the subscripts from $N$, $Dom$, and $d$.

Say that a total ordering $<$ of $V(G)$ is a \emph{degree ordering} when $v < w$ only if $d(v) \leq d(w)$.  A \emph{degree ordered} graph is a pair $(G, <)$, where $G$ is a graph and $<$ is a degree ordering of $G$.  For the sake of simplicity, we say that $G$ is a \emph{degree ordered graph} to indicate that there is a degree ordering $<$ such that $(G, <)$ is a degree ordered graph. For a vertex $v$ of a degree ordered graph $G$, define $\MAX_G(v) = \{w \in V(G) \mid w > v\}$ and $\MIN_G(v) = \{w \in V(G) \mid w < v\}$.  For $W \subseteq V(G)$, we also use $\max_G W$ and $\min_G W$ to refer to the maximum and minimum elements of $W$, according to $<$.  As before, we omit the subscript from $\MAX$, $\MIN$, $\max$, and $\min$ when there is no ambiguity.  For two vertices $v > w$, define the \emph{least common neighborhood} as $L(v,w) = N(vw) \cap \MIN(v)$; note that, by definition, $L(w,v)$ is undefined for $v > w$.

For $W \subseteq V(G)$, denote by $G[W]$ the subgraph of $G$ induced by $W$.  An \emph{independent set} is a set $W \subseteq V(G)$ formed by pairwise non-adjacent vertices. Graph $G$ is \emph{bipartite} when $V(G)$ can be partitioned into two independent sets $W_1$ and $W_2$, where possibly $W_2 = \emptyset$. In this case, the unordered pair $\{W_1, W_2\}$ is called a \emph{bipartition} of $G$. Furthermore, if $vw$ is an edge of $G$ for every $v \in W_1$ and $w \in W_2$, then $G$ is a \emph{bipartite complete} graph. A \emph{biclique} $H$ of $G$ is a bipartite complete induced subgraph of $G$; we also use the term \emph{biclique} to refer to both $V(H)$ and the unique bipartition of $H$.

Let $\mathcal{F}$ be a family of sets.  Say that $\mathcal{F}$ is \emph{pairwise intersecting} when $S \cap T \neq \emptyset$, for every $S, T \in \mathcal{F}$, while $\mathcal{F}$ is \emph{globally intersecting} when $\bigcap\mathcal{F} \neq \emptyset$.  Family $\mathcal{F}$ is \emph{Helly} when all its pairwise intersecting subfamilies are globally intersecting.  A graph $G$ is \emph{biclique-Helly} when its family of maximal bicliques is Helly, and it is \emph{hereditary biclique-Helly} when all its induced subgraphs are biclique-Helly.  

Denote by $C_n$ the cycle graph with $n$ vertices; $C_3$ is also called a \emph{triangle}.  For a graph $H$, say that $G$ is $H$-free when no induced subgraph of $G$ is isomorphic to $H$.  Similarly, for a family $\mathcal{H}$ of graphs, say that $G$ is $\mathcal{H}$-free when $G$ is $H$-free for every $H \in \mathcal{H}$.  The \emph{arboricity} $\alpha(G)$ of $G$ is the minimum number of edge-disjoint spanning forests into which $G$ can be decomposed.  Chiba and Nishizeki proved that $\alpha(G) \leq m^{1/2}$~\cite{ChibaNishizekiSJC1985}.

In this paper we also work with simple digraphs.  Let $D$ be a graph with vertex set $V(D)$ and edge set $E(D)$.  Write $v \to_D w$ to indicate that the ordered pair $(v,w)$ is an edge of $D$.  When $(v, w)$ is not an edge of $D$, we write $v \not\to_D w$.  For $v \in V(D)$, define $N^+_D(v) = \{w \in V(D) \mid v \to w\}$ and $N^-_D(v) = \{w \in V(D) \mid w \to v\}$.  Sets $N^+_D(v)$ and $N^-_D(v)$ are respectively the \emph{out-neighborhood} and \emph{in-neighborhood} of $v$, while the members of $N^+_D(v)$ and $N^-_D(v)$ are the \emph{out-neighbors} and \emph{in-neighbors} of $v$, respectively.  The \emph{out-degree} and \emph{in-degree} are the values $d^+_D(v) = |N^+_D(v)|$ and $d^-_D(v) = |N^-_D(v)|$, respectively.  When there is no ambiguity, we may omit the subscripts from $\to$, $\not\to$, $N^+$, $N^-$, $d^+$, and $d^-$.

Groshaus and Szwarcfiter formulated the following characterization of hereditary biclique-Helly graphs, by means of minimal forbidden induced subgraphs.

\begin{theorem}[\cite{GroshausSzwarcfiterDMTCS2008}]\label{thm:GroshausSzwarcfiterDMTCS2008}
  A graph is hereditary biclique-Helly if and only if it is does not contain any triangles, $C_5$'s, $C_6$'s, nor ladders as induced subgraphs (see Figure~\ref{fig:ladders}).
\end{theorem}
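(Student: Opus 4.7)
The plan is to prove both implications. For the forward direction, since the class of hereditary biclique-Helly graphs is closed under induced subgraphs, it is enough to exhibit, for each purported obstruction $H \in \{C_3, C_5, C_6\} \cup \{\text{ladders}\}$, a pairwise intersecting family of maximal bicliques of $H$ whose global intersection is empty. For $C_3$ the three edges themselves are maximal bicliques (any third vertex would create a triangle): they pairwise share a vertex but no vertex lies in all three. For $C_5$ and $C_6$ I would enumerate all maximal bicliques (edges and certain induced four-vertex paths) and point out three of them forming the desired bad family. For a ladder with rungs $u_iv_i$, the bicliques obtained by taking successive pairs of consecutive rungs are maximal, pairwise intersecting, and have empty global intersection.

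For the sufficiency direction, it suffices to show that any triangle-, $C_5$-, $C_6$-, and ladder-free graph $G$ is biclique-Helly, because the forbidden classes are themselves hereditary, so the argument would apply verbatim to every induced subgraph. I would take a pairwise intersecting family $\mathcal{F}$ of maximal bicliques of $G$ with empty global intersection, choose $\mathcal{F}$ of minimum cardinality, and for each $B_i\in\mathcal{F}$ fix a witness $v_i\in\bigcap_{j\neq i}B_j$. The vertices $v_i$ are pairwise distinct and satisfy $v_i\in B_j$ iff $i\neq j$. As is standard in Helly-type arguments for bicliques, I would reduce to the case $|\mathcal{F}|=3$ (either by invoking a Helly-number bound of $3$ for biclique families, or by replacing $\mathcal{F}$ by a sub-triple inherited from a minimal violation).

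With three bicliques $B_1,B_2,B_3$ and witnesses $v_1,v_2,v_3$, I would analyze the bipartitions $B_j=(X_j,Y_j)$ and record, for each pair $i\neq j$, which side of $B_j$ the vertex $v_i$ lies on. By triangle-freeness, neighborhoods of adjacent vertices are independent, which sharply constrains these sides. For each $B_i$, maximality of $B_i$ produces certificate vertices witnessing that $v_i$ cannot be added to $B_i$, namely neighbors of every vertex in one side of $B_i$ that $v_i$ misses. Plugging these certificate vertices together with $v_1, v_2, v_3$ and representatives of the $X_j$, $Y_j$ yields a small induced subgraph of $G$, in which I would verify by case analysis on the adjacency pattern of $v_1,v_2,v_3$ (either none, some, or all pairs adjacent) that an induced $C_5$, $C_6$, or ladder is forced; triangles are excluded by hypothesis and play the role of ruling out chords that would otherwise shorten cycles.

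The main obstacle is precisely this last case analysis: there are many possible configurations of side-assignments and certificate vertices, and one has to use triangle-freeness at several points to guarantee that the paths and cycles constructed are induced rather than chorded, and to show that when no short cycle appears one can iterate the construction to produce arbitrarily long ladders (so that the ladder family, rather than a single ladder, is what forms the necessary obstruction). This is where ladders enter the characterization as an infinite family, and where the bulk of the geometric-combinatorial work of the proof concentrates.
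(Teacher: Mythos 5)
First, a point of reference: the paper does not prove this statement at all --- it is quoted verbatim from Groshaus and Szwarcfiter \cite{GroshausSzwarcfiterDMTCS2008}, so there is no in-paper proof to compare your route against. Judged on its own terms, your proposal has genuine gaps. The most concrete one is that you have misidentified the ``ladders.'' They are not the prism graphs $P_k \times K_2$ with an unbounded number of rungs; the paper states explicitly that the forbidden family consists of \emph{six graphs on at most $8$ vertices}, and the proof of Theorem~\ref{thm:HBH recognition} reconstructs the generic ladder as a $C_4$ on $v_1,v_2,v_3,v_4$ together with four pendant vertices $w_1,\dots,w_4$, each $w_i$ adjacent only to $v_i$. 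Your necessity argument for ladders fails even for your own (prism) reading: the bicliques spanned by consecutive rung-pairs are \emph{not} pairwise intersecting, since the pairs $\{1,2\}$ and $\{3,4\}$ share no vertex. For the actual ladder, the correct witness is the four stars $N[v_i]$, $i=1,\dots,4$, which are maximal bicliques, pairwise intersecting, with empty total intersection. Correspondingly, your closing remarks about needing to ``iterate the construction to produce arbitrarily long ladders'' and about ladders entering ``as an infinite family'' point the sufficiency argument at the wrong target: the whole content of the characterization is that a \emph{bounded} list of obstructions suffices.

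Second, in the sufficiency direction the two load-bearing steps are both left unsupported. The reduction from a minimal violating family $\mathcal{F}$ to $|\mathcal{F}|=3$ is asserted ``by invoking a Helly-number bound of $3$ for biclique families,'' but establishing such a bound for triangle-free, $C_5$-, $C_6$-, ladder-free graphs is essentially equivalent to the theorem itself, so this is circular as stated; and a minimum-cardinality non-Helly family need not have size $3$ without further argument. The subsequent case analysis --- extracting certificate vertices from maximality, controlling which sides of each bipartition the witnesses $v_i$ land on, and forcing an induced $C_5$, $C_6$, or ladder --- is where all the work lies, and it is entirely deferred. The necessity direction for $C_3$, $C_5$, $C_6$ is fine (for $C_5$ take the three $K_{1,2}$'s centered at $v_1,v_2,v_4$; for $C_6$ the stars at $v_1,v_3,v_5$), but as it stands the proposal does not constitute a proof of either direction for the ladders, nor of sufficiency.
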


\begin{figure}
  \hspace{\stretch{1}} \includegraphics{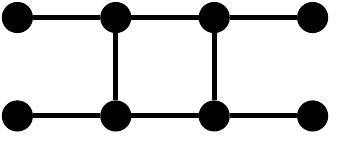} \hspace*{1cm} \includegraphics{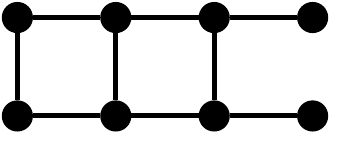} \hspace*{1cm} \includegraphics{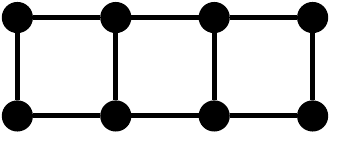} \hspace{\stretch{1}}
  \caption{The ladder graphs.}\label{fig:ladders}
\end{figure}

As a consequence of this theorem, the authors obtain an $O(n^3m^2)$ time algorithm for the recognition of hereditary biclique-Helly graphs (cf.~\cite{DouradoProttiSzwarcfiterEJC2009}).  

\section{Simple recognition of hereditary biclique-Helly graphs}
\label{sec:simple recognition}

In this section we rephrase Theorem~\ref{thm:GroshausSzwarcfiterDMTCS2008} in such a way that an $O(nm)$ time and $O(n^2)$ space recognition algorithm can be obtained with not to much effort.  To describe our algorithm, we require the following definitions for a graph $G$.  Say that a cycle of $G$ is \emph{dominated} if it contains a pair of dom-comparable vertices.  When every $C_4$ of $G$ is dominated, we say that $G$ is \emph{$C_4$-dominated}.  Theorem~\ref{thm:GroshausSzwarcfiterDMTCS2008} is rephrased as follows.

\begin{theorem}\label{thm:HBH recognition}
  A graph is hereditary biclique-Helly if and only if it is $C_4$-dominated and \{triangle, $C_5$, $C_6$\}-free.
\end{theorem}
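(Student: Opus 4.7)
The plan is to reduce this to Theorem~\ref{thm:GroshausSzwarcfiterDMTCS2008}: since that result already characterizes hereditary biclique-Helly graphs as \{triangle, $C_5$, $C_6$, ladder\}-free graphs, it suffices to prove that, inside the class of \{triangle, $C_5$, $C_6$\}-free graphs, being ladder-free is equivalent to being $C_4$-dominated. This reduces the task to a purely structural lemma about $C_4$'s, which is where the work lies.

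For the easier direction I would inspect each of the three ladders in Figure~\ref{fig:ladders}. Each one contains an induced $C_4$ whose four corners each have a private neighbor on the rail that the diagonally opposite corner lacks (in the longest ladder this is the middle square; in the shorter ones the choice is analogous). Thus no diagonal pair of this $C_4$ is dom-comparable, and because private neighbors are inherited from the ambient graph, the same $C_4$ remains non-dominated in any $G$ containing the ladder as an induced subgraph. Hence a $C_4$-dominated graph cannot contain any of the three ladders.

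For the harder direction, assume $G$ is \{triangle, $C_5$, $C_6$\}-free and admits a non-dominated $C_4$ on vertices $v_1 v_2 v_3 v_4$. The four failed dominations furnish witnesses $a \in N(v_1) \setminus N(v_3)$, $b \in N(v_3) \setminus N(v_1)$, $c \in N(v_2) \setminus N(v_4)$, $d \in N(v_4) \setminus N(v_2)$. Triangle-freeness immediately forces each of these witnesses to be non-adjacent to the other two $v_i$'s, so the eight vertices $\{v_1,\dots,v_4,a,b,c,d\}$ are pairwise distinct. I would next rule out the "opposite" edges $ab$ and $cd$: an edge $ab$ together with $v_1 a, b v_3, v_3 v_4, v_4 v_1$ (and non-edges already forced) yields an induced $C_5$. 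After that, a case analysis on which of the "parallel" edges $ac, bd$ and the "cross" edges $ad, bc$ are present, combined with $C_6$-freeness to kill long induced cycles through the eight vertices, should force the subgraph induced by $\{v_1,\dots,v_4,a,b,c,d\}$ to contain one of the three ladders of Figure~\ref{fig:ladders}; in the subcases where some rungs of the prospective ladder are still missing, I would iterate the argument on the auxiliary non-dominated $C_4$'s that the witnesses $a, b, c, d$ create together with the $v_i$.

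The main obstacle, I expect, is precisely this case analysis. Different edge patterns among $\{a,b,c,d\}$ produce different ladders, and in each subcase one must check that the exhibited ladder is \emph{induced}: every missing rail-edge and chord has to be excluded either by triangle-freeness (for short chords creating a $C_3$) or by the ban on $C_5$ and $C_6$ (for longer chords). A subtler point is that an auxiliary $C_4$ created by, say, the edge $ac$ may itself already be dominated in $G$; in that case the witness used to extend the ladder must be chosen as a dominator of an existing vertex rather than a fresh neighbor, to avoid introducing new $C_5$'s or $C_6$'s through the new vertex. Managing this bookkeeping cleanly, and in particular ensuring termination of the extension process within eight vertices so that one of the three listed ladders actually appears, is where the proof requires the most care.
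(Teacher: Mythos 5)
Your overall strategy coincides with the paper's: reduce to Theorem~\ref{thm:GroshausSzwarcfiterDMTCS2008} by showing that, among \{triangle, $C_5$, $C_6$\}-free graphs, a ladder is present exactly when some $C_4$ is non-dominated. Your easy direction is correct (and your remark that a private neighbor inside an induced subgraph remains a private neighbor in the ambient graph, so non-domination persists upward, is a point worth making explicitly). The setup of the hard direction is also the paper's: the four witnesses $a, b, c, d$, their non-adjacency to the remaining corners by triangle-freeness, the pairwise distinctness of the eight vertices, and the exclusion of the ``opposite'' edges $ab$ and $cd$ via an induced $C_5$.

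The gap is that you stop exactly where the argument is decided, and the machinery you anticipate (a case analysis over which ladder appears, iteration on auxiliary non-dominated $C_4$'s, re-choosing witnesses as dominators) is not needed and is not how the proof goes. The missing observation is that each of the four remaining pairs --- a pair of witnesses attached to \emph{adjacent} corners of the square --- also cannot be an edge, because such an edge closes an induced $C_6$ that travels the long way around the square. Concretely, if $ac \in E(G)$ then $a, c, v_2, v_3, v_4, v_1$ is a $C_6$, and it is induced: every chord from $a$ or $c$ to a corner is excluded either by triangle-freeness or by the defining properties $a \notin N(v_3)$ and $c \notin N(v_4)$, and $v_1v_3, v_2v_4 \notin E(G)$ since the square is induced (being triangle-free). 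The same six-cycle works verbatim for $ad$, $bc$, and $bd$. Hence \emph{no} edge joins two witnesses, the eight vertices induce precisely the $C_4$ with four pairwise non-adjacent pendant vertices --- a single one of the three ladders --- and no iteration, no second choice of witnesses, and no ``contains a ladder'' (as opposed to ``induces a ladder'') argument is required. As written, your proof is incomplete: the case analysis is neither enumerated nor shown to terminate in an induced ladder, and that is the entire content of this direction.
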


\begin{proof}
  By Theorem~\ref{thm:GroshausSzwarcfiterDMTCS2008}, graphs that contain triangles, $C_5$'s, or $C_6$'s as induced subgraphs are not hereditary biclique-Helly.  Suppose now that $G$ is a \{triangle, $C_5$, $C_6$\}-free graph that contains a cycle $v_1, v_2, v_3, v_4$ in which neither $v_1$ and $v_3$ nor $v_2$ and $v_4$ are dom-comparable.  For the sake of notation, call $v_{i+4} = v_i$ for every $i \in \mathbb{Z}$.  Then, for $i \in \mathbb{Z}$, there is a vertex $w_i \in N(v_i) \setminus N(v_{i+2})$. (Again, call $w_{j+4} = w_j$ for every $j \in \mathbb{Z}$.) Since $G$ is triangle-free, $w_i$ is adjacent to neither $v_{i-1}$ nor $v_{i+1}$.  Hence, $w_i \neq w_j$ for every $1 \leq i \leq j \neq 4$.

  Clearly, $v_1, v_2, v_3, v_4$ is an induced cycle because $G$ is triangle-free.  Now, consider all the possible edges between the vertices in $\{w_1, w_2, w_3, w_4\}$.  First observe that $w_i$ is not adjacent to $w_{i+2}$; otherwise $w_iv_iv_{i+1}v_{i+2}w_{i+2}$ would induce a $C_5$ in $G$.  Similarly, $w_i$ is adjacent to neither $w_{i-1}$ nor $w_{i+1}$; otherwise $w_{i-1}v_{i-1}v_{i-2}v_{i+1}w_{i+1}w_{i}$ would induce a $C_6$ in $G$.  Therefore, the subgraph of $G$ induced by $\{v_iw_i\}_{1 \leq i \leq 4}$ isomorphic to a ladder and, by Theorem~\ref{thm:GroshausSzwarcfiterDMTCS2008}, $G$ is not hereditary biclique-Helly .
    
  For the converse, just observe that the cycle of a ladder formed by the vertices of degree at least $3$ is not dominated.  Then, the result follows from Theorem~\ref{thm:GroshausSzwarcfiterDMTCS2008}.
\end{proof}

Theorem~\ref{thm:HBH recognition} yields a simple three-step algorithm for the recognition of hereditary biclique-Helly graphs, summarized in Algorithm~\ref{alg:HBH recognition}.  Discuss its implementation.  For Step~\ref{alg:HBH recognition:triangle}, the algorithm in~\cite{ChibaNishizekiSJC1985} is called so as to find a triangle in $O(m\alpha(G))$ time and $O(m)$ space, when one exists.  In the rest of this section we discuss a simple $O(nm)$ time and $O(m^2)$ space implementation for Steps \ref{alg:HBH recognition:C4}~to~\ref{alg:HBH recognition:C6}.

\begin{algorithm}
  \caption{Recognition of hereditary biclique-Helly graphs.}\label{alg:HBH recognition}
  \Input{a graph $G$.}
  \Output{if $G$ is not hereditary biclique-Helly, then either a triangle, a non-dominated $C_4$, an induced $C_5$, or an induced $C_6$; otherwise, a message.}

  \begin{AlgorithmSteps}
    \Step{If $G$ contains a triangle $T$, then output $T$ and halt.}\label{alg:HBH recognition:triangle}
    \Step{If $G$ contains a non-dominated $C_4$ called $C$, then output $C$ and halt.}\label{alg:HBH recognition:C4}
    \Step{If $G$ contains an induced $C_5$ called $C$, then output $C$ and halt.}\label{alg:HBH recognition:C5}
    \Step{If $G$ contains an induced $C_6$ called $C$, then output $C$ and halt.}\label{alg:HBH recognition:C6}
    \Step{Output ``$G$ is HBH''.}
  \end{AlgorithmSteps}
\end{algorithm}

\subsection{An $O(nm)$ time implementation of Step~\ref{alg:HBH recognition:C4}}
\label{sec:slow}

The main tools for this step are the squares families.  Fix a degree ordered graph $G$ with no induced triangles for the rest of this section.  

For a vertex $v$, the \emph{squares family} of $v$ is the family $\MS(v)$ that contains one triple $S = (v, w, L(v,w))$ for each $w < v$ such that $L(v,w) \neq \emptyset$.  Refer to $v, w$ and $L(v,w)$ as the \emph{high vertex}, \emph{low vertex}, and \emph{common neighborhood} of $S$, respectively.  When $|L(v,w)| > 1$, the triple $S$ encodes all the $C_4$'s that contain $v$ and $w$, where $v$ is the maximum vertex of the $C_4$.  Indeed, $v,a,w,b$ is a $C_4$ of $G$ and $v > \max\{a,b,w\}$ if and only if $a, b \in L(v,w)$.  In this case, we say that $S$ \emph{represents} the cycle $v,a,w,b$, for every $a,b \in L(v,w)$.  The \emph{squares family} of $G$ is $\MS(G) = \bigcup_{v \in V(G)}{\MS(v)}$.  When $G$ is understood, we will simply write $\MS$ to mean $\MS(G)$.  Observe that every $C_4$ of $G$ is represented by exactly one triple of $\MS$.  Thus, the squares family of $G$ encodes of all the $C_4$'s of $G$ in $O(m\alpha(G))$ space, though $G$ could have $O(n^2)$ $C_4$'s~\cite{ChibaNishizekiSJC1985}.  For the sake of notation, write $v(S)$, $w(S)$, and $L(S)$ to respectively mean the high vertex, the low vertex, and the common neighborhood of $S$, for every $S \in \MS$.  Also, we sometimes write $(v,w)$ instead of $(v,w,L(v,w))$; we may write, for instance, that $(v,w) \in \MS$ to indicate that $(v,w,L(v,w)) \in \MS$.

Say that $S \in \MS$ is \emph{dominated} when all the $C_4$'s represented by $S$ are dominated.  (If $|L(S)| = 1$, then $S$ is vacuously dominated.)  By definition, $G$ is $C_4$-dominated if and only if $S$ is dominated, for every $S \in \MS$.  Say also that $S$ is \emph{safe} if $v(S)$ dominates $w(S)$, and that it is \emph{unsafe} otherwise.  Observe that if $S$ is safe, then it is also dominated.  Otherwise, $S$ is dominated if and only if $a$ dominates $b$, for every $a,b \in L(S)$ such that $a > b$.  These observations yield a simple algorithm to find a non-dominated $C_4$ of $G$, when one such $C_4$ exists, summarized as Algorithm~\ref{alg:C4 slow}.  

\begin{algorithm}
  \caption{Non-dominated $C_4$ of a triangle-free graph $G$.}\label{alg:C4 slow}
  \Input{a degree ordered graph $G$ with no induced triangles.}
  \Output{if existing, a non-dominated $C_4$ of $G$; otherwise, a message.}

  \begin{AlgorithmSteps}
    \Step{Let $v_1 > \ldots > v_n$ be the vertices of $G$.}\label{alg:C4 slow:ordering}
    \Step{Compute the matrix $D \in \{0,1\}^{n \times n}$ such that $d_{i,j} = 1$ if and only if $v_i$ dominates $v_j$, for $1 \leq i < j \leq n$.  Write $D(v_i, v_j) = d_{i,j}$.}\label{alg:C4 slow:matrix}
    \Step{For $i = 1$ to $n$, do:}
    \begin{AlgorithmBlock}    
      \Step{Compute $UNSAFE := \{(v_i, w, L(v_i,w)) \mid L(v_i,w) \neq \emptyset \text{ and } D(v_i, w) = 0\}$.}\label{alg:C4 slow:unsafe}
      \Step{For each $S \in UNSAFE$ do:}
      \begin{AlgorithmBlock}    
        \Step{Let $a_1 >  \ldots > a_{|L(S)|}$ be the vertices of $L(S)$}\label{alg:C4 slow:L ordering}
        \Step{If $D(a_j, a_{j+1}) = 0$ for $j \in \{1, \ldots, |L(S)|-1\}$, then output $v(S),a_j,w(S),a_{j+1}$ and halt.}\label{alg:C4 slow:L check}
      \end{AlgorithmBlock}
    \end{AlgorithmBlock}
    \Step{Output ``$G$ is $C_4$-dominated''.}
  \end{AlgorithmSteps}
\end{algorithm}

In Step~\ref{alg:C4 slow:unsafe}, Algorithm~\ref{alg:C4 slow} finds each unsafe $S \in \MS(v_i)$.  Following, the inner cycle checks that every such unsafe triple $S$ is dominated.  For this, it first computes a degree ordering $a_1 > \ldots > a_{|L(S)|}$ of $L(S)$, and then checks that $a_j$ dominates $a_{j+1}$, for every $1 \leq j < |L(S)|$.  If this check is fulfilled, then, since domination is a transitive relation, we obtain that $a$ dominates $b$ for every $a,b \in L(S)$ such that $a > b$.  Thus, Algorithm~\ref{alg:C4 slow} is correct.

Discuss the time complexity of the algorithm.  The matrix $D$ at Step~\ref{alg:C4 slow:matrix} can be obtained in $O(nm)$ time easily.  For Step~\ref{alg:C4 slow:unsafe}, we run one iteration of the method $C4$ developed by Chiba and Nishizeki in~\cite{ChibaNishizekiSJC1985}.  Each iteration of the method $C4$ takes $G$ and a vertex $v_i \in V(G)$ as input, and it outputs $\MS(v_i)$ in $O(\sum_{w > v_i}{d(w)}) = O(m)$ time and space.  (In fact, method $C4$ discards those $S \in \MS(v_i)$ for which $|L(S)| = 1$.  However, the algorithm can be easily modified so as to output these triples as well.) Furthermore, for each $S \in \MS(v_i)$, the list $L(S)$ given by the $C4$ is ordered in such a way that $a \in L(S)$ appears before $b \in L(S)$ if and only if $a > b$. (Here we assume that $a$ also appears before $b$ in $N(v_i)$.  By preprocessing $G$, such an ordering can be obtained in $O(n+m)$ time for every $v_i \in V(G)$.)  Thus, Step~\ref{alg:C4 slow:L ordering} is not actually executed.  Next, each $S \in \MS(v_i)$ is traversed so as to evaluate if it belongs to $UNSAFE$ in Step~\ref{alg:C4 slow:unsafe}.  For each $S \in UNSAFE$, the dominations at Step~\ref{alg:C4 slow:L check} are checked.  As each access to $D$ takes $O(1)$ time, each iteration of the outer loop takes $O(|L(S)|) = O(n)$ time.  Summing up, Algorithm~\ref{alg:C4 slow} takes $O(nm)$ time.  With respect to the space, the heaviest data structure used by the algorithm is the domination matrix $D$.  Thus, Algorithm~\ref{alg:C4 slow} requires $O(n^2)$ bits.

\subsection{An $O(nm)$ time implementation of Step~\ref{alg:HBH recognition:C5}}

For this paragraph, let again $G$ be a triangle-free graph.  Observe that every $C_5$ of $G$ must be induced; i.e., $G$ contains no induced $C_5$'s if and only if it contains no $C_5$'s at all.  It is not so hard to find a $C_5$ that contains a given vertex $v \in V(G)$ in $O(n+m)$ time, when such a cycle exists.  In fact, there is a $C_5$ that contains $v$ if and only if there are two adjacent vertices $w$ and $z$ at distance $2$ from $v$.  Indeed, both $N(vw)$ and $N(vz)$ are empty, because $G$ is triangle-free.  Then $v, a, b, w, z$ is a $C_5$ for any $a \in N(vw)$ and $b \in N(vz)$.  We sum up this procedure in Algorithm~\ref{alg:C5 slow}.  Its not hard to see that this algorithm takes $O(nm)$ time and $O(n+m)$ space.

\begin{algorithm}
  \caption{Induced $C_5$ in a triangle-free graph $G$.}\label{alg:C5 slow}
  \Input{a triangle-free graph $G$.}
  \Output{if existing, an induced $C_5$ of $G$; otherwise, a message.}

  \begin{AlgorithmSteps}
    \Step{For every $v \in V(G)$:}
    \begin{AlgorithmBlock}
      \Step{If there are two adjacent vertices $w, z$ at distance $2$ from $v$, then output $v, a, b, w, z$, for any $a \in N(vw)$ and $b \in N(vz)$, and halt.}
      \Step{Output ``$G$ contains no induced $C_5$'s.}
    \end{AlgorithmBlock}
  \end{AlgorithmSteps}
\end{algorithm}

\subsection{An $O(nm)$ time implementation of Step~\ref{alg:HBH recognition:C6}}

For the last step, suppose that $G$ is a degree ordered graph that is $C_4$-dominated and contains no triangles. The next lemma shows how to find an induced $C_6$ in $G$ that contains any given vertex $v \in V(G)$, if existing.

\begin{lemma}\label{lem:C6 slow}
  Let $G$ be a degree ordered graph that is $C_4$-dominated and contains no triangles, and $v_0 \in V(G)$.  Then, there is an induced $C_6$ in $G$ that contains $v_0$ if and only if there is a cycle $v_0, \ldots, v_5$ with the following properties:
  \begin{enumerate}[$(i)$]
    \item $v_3 \not\in N(v_0)$, 
    \item $v_1 = \min N(v_0v_2)$ and $v_5 = \min N(v_0v_4)$, 
    \item $v_0$ is dominated by neither $v_2$ nor $v_4$, and
    \item $v_1$ and $v_5$ are not dom-comparable.
  \end{enumerate}
\end{lemma}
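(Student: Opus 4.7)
My plan is to prove the two implications separately. In the backward direction, I will show that any cycle $v_0,\ldots,v_5$ satisfying (i)--(iv) is already \emph{induced}, so the cycle itself is the sought induced $C_6$. In the forward direction, I will build such a cycle from a given induced $C_6$ by replacing its two neighbours of $v_0$ with degree-minimal choices. The main obstacle will be the forward direction, where these minimal choices might collapse into a single vertex, and even when they do not, (iv) still needs an argument.

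For the backward direction, triangle-freeness rules out the short chords $v_iv_{i+2}$ and (i) rules out $v_0v_3$, so only $v_1v_4$ and $v_2v_5$ must be excluded. Both follow the same pattern: if $v_1v_4 \in E(G)$, then $v_0,v_1,v_4,v_5$ is an induced $C_4$ and, by $C_4$-domination combined with (iv), the pair $v_0,v_4$ must be dom-comparable. However, $v_4$ dominating $v_0$ contradicts (iii), while $v_0$ dominating $v_4$ places $v_3 \in N(v_4)$ into $N(v_0)$, contradicting (i). The case $v_2v_5$ is symmetric, with $v_2$ in place of $v_4$.

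For the forward direction, fix any induced $C_6$ through $v_0$, say $v_0,u_1,u_2,u_3,u_4,u_5$. Set $v_2 := u_2$, $v_3 := u_3$, $v_4 := u_4$, $v_1 := \min N(v_0 v_2)$, and $v_5 := \min N(v_0 v_4)$; both minima exist because $u_1 \in N(v_0 v_2)$ and $u_5 \in N(v_0 v_4)$. Properties (i) and (ii) hold by construction, and (iii) follows from inducedness of the original $C_6$, using $u_5 \in N(v_0)\setminus N(u_2)$ and $u_1 \in N(v_0)\setminus N(u_4)$. What remains is to check that the six vertices are distinct (equivalently, $v_1 \ne v_5$) and to verify (iv).

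I will discharge both by the same style of argument: apply $C_4$-domination to a carefully chosen induced $C_4$ to force a dominance, then extract a contradiction by combining it with the degree-ordering minimality of $v_1$ or $v_5$. For distinctness, if $v_1 = v_5 = w$, two applications of $C_4$-domination---to the induced $C_4$s $w,u_2,u_3,u_4$ and then $u_1,u_2,w,v_0$, after ruling out the other diagonal pair via the inducedness of the $C_6$---force $w$ to dominate both $u_3$ and $u_1$. Combining $d(w) \le d(u_1)$ (from $w = \min N(v_0 v_2)$) with $d(u_1) \le d(w)$ (from dominance) gives $N(u_1) = N(w)$, contradicting $u_4 \in N(w)\setminus N(u_1)$. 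For (iv), if $v_5$ dominates $v_1$ then $u_2 \in N(v_1) \subseteq N(v_5)$; applying $C_4$-domination to the induced $C_4$ $v_0,u_5,u_4,v_5$ forces $v_5$ to dominate $u_5$, and combining $d(v_5) \le d(u_5)$ with $d(u_5) \le d(v_5)$ yields $N(v_5) = N(u_5)$, contradicting $u_2 \in N(v_5)$ and $u_2 u_5 \notin E(G)$. The case of $v_1$ dominating $v_5$ is symmetric.
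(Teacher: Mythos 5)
Your proof is correct, and the backward direction matches the paper's almost exactly (you are in fact slightly more careful than the paper, which attributes the non-domination of the chord-induced $C_4$ only to (iii) and (iv), whereas you correctly note that excluding ``$v_0$ dominates $v_4$'' really rests on (i)). The forward direction, however, takes a genuinely different route. The paper's key observation is that replacing $w_1$ by $v_1=\min N(v_0v_2)$ produces another \emph{induced} $C_6$: since the $C_4$ $v_0,v_1,v_2,w_1$ is dominated, its diagonal $\{v_0,v_2\}$ is not dom-comparable, and $w_1>v_1$ forces $w_1$ to dominate $v_1$, so $N(v_1)\subseteq N(w_1)$ and $v_1$ inherits all the non-adjacencies of $w_1$; the same is then done for $w_5$, and all of (i)--(iv) (including distinctness) drop out of the inducedness of the final cycle in one stroke. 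You instead never establish that the new hexagon is induced, and verify distinctness ($v_1\neq v_5$) and property (iv) directly, each by its own application of $C_4$-domination to an auxiliary $4$-cycle combined with the degree-minimality of $v_1$ or $v_5$ (the ``min dominates implies twin'' squeeze). Both arguments work; the paper's is more economical because a single domination argument yields inducedness and hence everything else, while yours is more self-contained in that it only ever asserts the four listed properties rather than the stronger inducedness claim. One small point you should add: in your argument for (iv) the auxiliary $4$-cycle $v_0,u_5,u_4,v_5$ degenerates when $v_5=u_5$; that case is disposed of immediately, since then $u_2\in N(v_1)\subseteq N(v_5)=N(u_5)$ already contradicts the inducedness of the original $C_6$, but the case split should be stated (the symmetric remark applies to $v_1=u_1$).
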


\begin{proof}
  Suppose that $G$ contains a $C_6$ induced by $v_0, w_1, v_2, v_3, v_4, w_5$, in this order, and call $v_1 = \min N(v_0v_2)$ and $v_5 = \min N(v_0v_4)$.  If $w_1 \neq v_1$, then $v_0, v_1, v_2, w_1$ is a $C_4$, that is dominated by hypothesis.  Thus, $w_1$ dominates $v_1$ because $w_1 > v_1$ and $v_0$ is not dom-comparable with $v_2$.  Then, $v_0, \ldots, v_4, w_5$ is also an induced cycle.  Similarly, if $v_1 \neq w_1$, then $v_0,\ldots, v_5$ is an induced cycle as well.  Thus, $(i)$--$(iv)$ follow.

  For the converse, suppose that $v_0, \ldots, v_5$ is a cycle satisfying $(i)$--$(iv)$.  By $(i)$ and the fact that $G$ is triangle-free, the only possible edges between vertices of the cycle, besides those that are included in the cycle, are $v_1v_4$ and $v_2v_5$.  If $v_1$ is adjacent to $v_4$, then $v_0, v_4, v_5, v_1$ is a $C_4$ that, by $(iii)$ and $(iv)$, is not dominated.  Analogously, if $v_2$ is adjacent to $v_5$, then $v_0,v_1,v_2,v_5$ is a non-dominated $C_4$.  Finally, if neither $v_1v4$ nor $v_2v_5$ are edges of $G$, then $v_0, \ldots, v_5$ induce a $C_6$.
\end{proof}

The above lemma yields Algorithm~\ref{alg:C6 slow}, that finds an induced $C_6$, if existing.  With respect to the time complexity, as in Algorithm~\ref{alg:C4 slow}, Step \ref{alg:C6 slow:matrix} take $O(nm)$ time. A single traversal of $N(w_1)$, for every $w_1 \in N(w_0)$, is enough to compute Step~\ref{alg:C6 slow:N_2}.  As each access to $D$ takes $O(1)$ time, Step~\ref{alg:C6 slow:N_2} takes $O(n+m)$ time.  For Step~\ref{alg:C6 slow:tree}, first mark each vertex in $N(w_0)$.  Then, a single traversal of $N(w_2)$, for each $w_2 \in N_2$, while accessing the mark of the vertices, is enough to compute $p$ at Step~\ref{alg:C6 slow:tree}.   Thus, Step~\ref{alg:C6 slow:tree} also requires $O(n+m)$ time. For Step~\ref{alg:C6 slow:cycle}, first compute, for every $w_3 \not \in N(w_0)$, the ordering $u_1, \ldots, u_k$ of $N(w_3) \cap N_2$ so that $p(u_i) < p(u_{i+1})$ ($1 \leq i < k$).  If $w_2$, $w_4$ satisfy the conditions of Step~\ref{alg:C6 slow:cycle} then $u_i$ and $u_{i+1}$ also satisfy the conditions at it follows that Step~\ref{alg:C6 slow:cycle}, for some $1 \leq i < k$.  Then, as each access to $D$ and $p$ takes $O(1)$ time, Step~\ref{alg:C6 slow:cycle} takes $O(n+m)$ time. Therefore, the inner loop is executed in $O(nm)$ time, so the the time complexity of Algorithm~\ref{alg:C6 slow} is $O(nm)$.  For the spatial complexity, observe that matrix $D$ requires $O(n^2)$ bits, while all the other variables require at most $O(n+m)$ bits.  

\begin{algorithm}
  \caption{Induced $C_6$ in a $C_4$-dominated graph $G$ with no triangles.}\label{alg:C6 slow}
  \Input{a degree ordered graph $G$ that is $C_4$-dominated and contains no triangles.}
  \Output{if existing, an induced $C_6$ of $G$; otherwise, a message.}

  \begin{AlgorithmSteps}

    \Step{Let $v_1 > \ldots > v_n$ be the vertices of $G$.}\label{alg:C6 slow:ordering}
    \Step{Compute the matrix $D \in \{0,1\}^{n \times n}$ such that $d_{i,j} = 1$ if and only if $v_i$ dominates $v_j$, for $1 \leq i < j \leq n$.  Write $D(v_i, v_j) = d_{i,j}$.}\label{alg:C6 slow:matrix}
    \Step{For each $w_0 \in V(G)$, do:}
    \begin{AlgorithmBlock}
      \Step{Compute $N_2 := \{w_2 \in V(G) \mid N(w_0w_2) \neq \emptyset \text{ and } D(w_2,w_0) = 0\}$}\label{alg:C6 slow:N_2}
      \Step{For each $w_2 \in N_2$, set $p(w_2) := \min N(w_0w_2)$.}\label{alg:C6 slow:tree}
      \Step{If there is a vertex $w_3 \not\in N(w_0)$ that is adjacent to $w_2, w_4 \in N_2$ and $D(p(w_2), p(w_4)) + D(p(w_4), p(w_2)) = 0$, then output $w_0, p(w_2), w_2, w_3, w_4, p(w_4)$ and halt.}\label{alg:C6 slow:cycle}
    \end{AlgorithmBlock}
    \Step{Output ``$G$ contains no induced $C_6$'s''.}
  \end{AlgorithmSteps}
\end{algorithm}

\section{Faster recognition of $C_4$-dominated graphs with no triangles}
\label{sec:C4 fast}

In this section we develop an improved implementation of Algorithm~\ref{alg:C4 slow} whose running time and space consumption are $O(m\alpha(G))$ and $O(m)$, respectively.  The idea is the same, for each $S \in \MS$ we first check if $S$ is safe.  If not, then we check the dominations between the vertices in $L(S)$.  The major difference is that the domination matrix $D$ is no longer employed; instead, while checking the safeness, we compute a digraph that encodes some dominations of interest of $G$.  We begin with the description of this digraph.  As before, we assume that $G$ is a degree ordered graph with no triangles.

Fix $S \in \MS$.  Define the binary relation $\to_S$ on $L(S)$ as follows: for $a,b \in L(S)$, $a \to_S b$ if and only if $a < b$ and there is no $c \in L(S)$ such that $a < c < b$.  In other words, $\to_S$ defines the subordering of $<$ induced by the members in $L(S)$.  This subordering is exactly the same that is given by method $C4$ (see Section~\ref{sec:slow}).  We require a generalization of $\MS$ from vertices to sets.  For $V \subseteq V(G)$, define $\MS(V) = \bigcup_{v \in V}{\MS(v)}$.  Note that $\MS(\emptyset) = \emptyset$ and $\MS(V(G)) = \MS(G)$.  Now we are ready to define the domination digraph.

For $V \subseteq V(G)$, the \emph{unsafe domination digraph} of $V$ is the digraph $U(V)$ with vertex set $V(G)$ such that, for $a,b \in V(G)$, $a \to_D b$ if and only if $a \to_S b$ for some unsafe $S \in \MS(V)$.  In other words, for each unsafe $S \in \MS(V)$, there is a path that goes through the vertices of $L(S)$ in the order given by $<$.  The \emph{unsafe domination digraph} of $G$ is the digraph $U(G) = U(V(G))$.  The following lemma shows that, when every $S \in \MS(V)$ is dominated, $U(V)$ is actually a directed forest that encodes some dominations.

\begin{lemma}\label{lem:domination tree}
  Let $G$ be a degree ordered graph that is $C_4$-dominated and contains no triangles, and $V \subseteq V(G)$.  If $S$ dominated for every $S \in \MS(V)$, then the following conditions hold for any $v \in V(G)$:
  \begin{enumerate}[$(i)$]
    \item $d^+_{U(V)}(v) \leq 1$, and
    \item $v$ is dominated in $G$ by all its out-neighbors.
  \end{enumerate}
\end{lemma}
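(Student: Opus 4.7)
The plan is to handle $(ii)$ first since it should fall out directly from the definitions, and then use $(ii)$ as a lever for $(i)$.

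For $(ii)$, suppose $a \to_{U(V)} b$. By definition of $U(V)$, some unsafe $S \in \MS(V)$ satisfies $a \to_S b$, which means $a,b \in L(S)$ with $a < b$. Using the observation made just before Algorithm~\ref{alg:C4 slow}, namely that an unsafe triple $S$ is dominated iff every larger element of $L(S)$ dominates every smaller one, the hypothesis that $S$ is dominated gives immediately that $b$ dominates $a$ in $G$.

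For $(i)$, I would argue by contradiction. Assume some $a$ has two distinct out-neighbors $b \neq b'$ in $U(V)$, and without loss of generality $b < b'$. Pick unsafe triples $S = (v,w,L(v,w))$ and $S' = (v',w',L(v',w'))$ in $\MS(V)$ realizing $a \to_S b$ and $a \to_{S'} b'$ respectively. The heart of the argument is to show that $b$ must belong to $L(S')$ and sit strictly between $a$ and $b'$ in $<$, contradicting the minimality condition built into the definition of $a \to_{S'} b'$. By $(ii)$ applied to $a \to_S b$, one has $N(a) \subseteq N(b)$. Since $a \in L(v',w')$, the vertex $a$ is adjacent to both $v'$ and $w'$, so by the inclusion $b$ is also adjacent to both $v'$ and $w'$. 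In particular $b \neq v'$, so $b$ and $v'$ are comparable under $<$. If $b < v'$, then $b \in N(v') \cap N(w') \cap \MIN(v') = L(S')$, and $a < b < b'$ with $b \in L(S')$ directly contradicts $a \to_{S'} b'$. If instead $b > v'$, then since $b' \in L(S') \subseteq \MIN(v')$ we get $b > v' > b'$, contradicting $b < b'$.

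The main obstacle is conceptual rather than technical: one has to notice that $(ii)$ is exactly the tool that makes $(i)$ work, because the inclusion $N(a)\subseteq N(b)$ transports adjacencies of $a$ to $b$ and thereby forces $b$ into the common neighborhood that defines $L(S')$. Once this is seen, the case split on whether $b$ is smaller or larger than $v'$ produces the contradiction with no further graph-theoretic input, and the triangle-freeness and $C_4$-domination hypotheses enter only through the preceding remark that underwrites $(ii)$.
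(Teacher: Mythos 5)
Your proof is correct, and part $(ii)$ coincides with the paper's. For part $(i)$, however, you take a genuinely different (essentially dual) route. The paper fixes the two out-neighbors $a > b$ of a vertex $v$ and works with the triple $S$ of the \emph{larger} out-neighbor: since $v < b < a$, consecutiveness of $v \to_S a$ forces $b \not\in L(S)$, so $b$ misses one of $v(S), w(S) \in N(v)$ and hence does not dominate $v$; combined with $v < b$, this shows $v$ and $b$ are not dom-comparable, so the triple $T$ with $v \to_T b$ is not dominated, contradicting the hypothesis. You instead prove $(ii)$ first, use it to conclude that the \emph{smaller} out-neighbor dominates the source, transport the adjacencies to $v'$ and $w'$ so as to place that out-neighbor inside $L(S')$ for the triple of the larger out-neighbor, and contradict the consecutiveness built into $a \to_{S'} b'$ (your case split on $b \lessgtr v'$ is airtight, since $b' \in L(S') \subseteq \MIN(v')$ rules out $b > v'$). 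Both arguments exploit the same tension---an element of one triple's $L$-set sitting strictly between two consecutive elements of the other's---but the paper contradicts the domination hypothesis while you contradict the definition of the successor relation. What the paper's version buys is that it explicitly exhibits the non-dominated $C_4$, namely $v(T), v, w(T), b$, together with the fact that $b$ does not dominate $v$; this is reused almost verbatim in the correctness analysis of Algorithm~\ref{alg:C4 fast} to justify the witness output at Step~\ref{alg:C4 fast:halt 1}. Your version is arguably cleaner in that it derives $(i)$ from $(ii)$ with no further appeal to the graph structure, but it does not hand you that explicit witness.
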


\begin{proof}
  $(i)$. Suppose that $v \in V(G)$ has at least two out-neighbors $a > b$ in $U(V)$.  Then, by definition, there are two unsafe triples in $\MS(V)$, say $S$ and $T$, such that $v \in L(S) \cap L(T)$, $v \to_S a$, and $v \to_T b$.  Now, since $a > b$ and $v \to_S a$, it follows that $b < v(S)$ and $b \not \in L(S)$.  Consequently, $b$ is not adjacent to either $v(S)$ or $w(S)$, thus $v$ is not dominated by $b$ in $G$.  Also, since $v \to_T b$, it follows that $v < b$, i.e. $d_G(b) \geq d_G(v)$, hence $b$ is neither dominated by $v$ in $G$.  But then, $T \in \MS(V)$ is not dominated.

  $(ii)$. If $d^+_{U(V)}(v) = 0$, then $(ii)$ is vacuously true.  Otherwise, let $w$ be an out-neighbor of $v$.   Since $v \to_{U(V)} w$, we obtain that $v \to_S w$, for some unsafe $S \in \MS(V)$.  Since $S$ is dominated and unsafe, it follows that $w$ dominates $v$ in $G$.
\end{proof}

Recall that the idea of the new implementation is to build $U(G)$ so as to test the dominations inside $L(S)$, for every unsafe $S \in \MS$.  In turn, to build $U(G)$, we need to know the safeness status of some triples in $\MS$, for which we require the dominations of these triples.  It turns out that $U(G)$ can be iteratively computed while the partial results are used to check the safeness of those triples of interest.  The next lemma shows how this construction is done.  

\begin{lemma}\label{lem:C4 algorithm}
  Let $G$ be a degree ordered graph that is $C_4$-dominated and contains no triangles, and $S \in \MS$.  Call $v = v(S)$, $w = w(S)$, and $L = L(S)$.  If every triple of $\MS(\MAX(v))$ is dominated, then the following are equivalent statements. 
  \begin{enumerate}[$(i)$]
    \item $S$ is safe. 
    \item $|L| = |N(w) \cap \MIN(v)|$ and if $w \in L(T)$ for some unsafe $T \in \MS(\MAX(v))$, then there is a path from $w$ to $v$ in $U(\MAX(v))$.
  \end{enumerate}
\end{lemma}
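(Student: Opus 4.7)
The plan is to prove both directions using Lemma~\ref{lem:domination tree} to convert paths in $U(\MAX(v))$ into actual dominations in $G$, the key tool being the structure of $L(T)$ for triples $T$ that have both $w$ and $v$ as common neighbors.

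For $(i) \Rightarrow (ii)$, assume $N(w) \subseteq N(v)$. Then $N(w) \cap \MIN(v) \subseteq N(v) \cap N(w) \cap \MIN(v) = L$, and the reverse inclusion is definitional, yielding $|L| = |N(w) \cap \MIN(v)|$. For the second clause, let $T \in \MS(\MAX(v))$ be unsafe with $w \in L(T)$. Then both $v(T)$ and $w(T)$ are neighbors of $w$, hence neighbors of $v$; since $v < v(T)$, this forces $v \in L(T)$ as well. Because $\to_T$ is the suborder of $<$ restricted to $L(T)$, walking from $w$ upward through the consecutive elements of $L(T)$ to $v$ gives a directed path $w \to \cdots \to v$, and each of its edges lies in $U(\MAX(v))$ since $T \in \MS(\MAX(v))$ is unsafe.

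For $(ii) \Rightarrow (i)$, fix an arbitrary $x \in N(w)$; I would show $x \in N(v)$ by splitting on the position of $x$ relative to $v$. First, $x = v$ is impossible because $L(v,w) \neq \emptyset$ together with triangle-freeness forces $vw \notin E(G)$. If $x < v$, then $x \in N(w) \cap \MIN(v)$, and combining $L \subseteq N(w) \cap \MIN(v)$ with the first clause of $(ii)$ gives $L = N(w) \cap \MIN(v)$, so $x \in L \subseteq N(v)$. The substantive case is $x > v$: suppose for contradiction $x \notin N(v)$, pick any $a \in L$ (possible since $L \neq \emptyset$), and form $T = (x, a, L(x,a))$. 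Since $x, a \in N(w)$ and $w < v < x$, we have $w \in L(x,a)$, hence $T \in \MS(\MAX(v))$; moreover $v \in N(a) \setminus N(x)$ certifies that $T$ is unsafe. The second clause of $(ii)$ then provides a directed path $w \to \cdots \to v$ in $U(\MAX(v))$, and Lemma~\ref{lem:domination tree}$(ii)$ together with transitivity of the domination relation yields $v$ dominating $w$, contradicting $x \in N(w) \setminus N(v)$.

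The main subtlety lies in the construction of $T$ in the $x > v$ case: one needs $w(T)$ such that simultaneously $w \in L(T)$ (to invoke clause two of $(ii)$) and $T$ is unsafe. Triangle-freeness blocks the naive pick $w(T) = w$ because any common neighbor of $x$ and $w$ would form a triangle, so one is forced to reach into $L$ for $w(T)$; any $a \in L$ then works because $a$ is adjacent to both $v$ and $w$ and $a < v < x$, which simultaneously delivers $w \in L(x, a)$ and a $v$-witness to unsafeness under the standing assumption $x \notin N(v)$.
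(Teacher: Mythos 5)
Your proof is correct and follows essentially the same route as the paper's: the same counting argument pins down $L = N(w)\cap\MIN(v)$, and the same construction of the unsafe triple $(x,a)$ with $a\in L$ (the paper's $(z,a)$) feeds clause two of $(ii)$ into Lemma~\ref{lem:domination tree} to force $v$ to dominate $w$. Your explicit handling of the $x=v$ case via triangle-freeness is a small point the paper glosses over, but the substance is identical.
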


\begin{proof}
  $(i) \Longrightarrow (ii)$.  Suppose first that $S$ is safe, i.e., $v$ dominates $w$.  By definition, $L = N(vw) \cap \MIN(v)$, thus $|L| = |N(w) \cap \MIN(v)|$.  Now, suppose that there is some unsafe $T \in \MS(\MAX(v))$ such that $w \in L(T)$.  In this case, since $v$ dominates $w$, it follows that $v$ is adjacent to both $v(T)$ and $w(T)$.  Hence $v \in L(T)$, because $v < v(T)$ as $T \in \MS(\MAX(v))$.  Therefore, there is a path from $w$ to $v$ in $U(\MAX(v))$ because $v > w$.

  $(ii) \Longrightarrow (i)$. In this case, we argue by contradiction.  Suppose that $(ii)$ is true and yet $S$ is unsafe, i.e., $N(w) \setminus N(v)$ contains some vertex $z$.  Since $L = N(vw) \cap \MIN(v) \subseteq N(v)$ and $|L| = |N(w) \cap \MIN(v)|$, we obtain that $L = N(w) \cap \MIN(v)$, thus $z > v$.  Fix $a \in L$.  Since $w$ is adjacent to both $z$ and $a$, it follows that $w \in L(z,a)$, thus $(z, a) \in \MS(z)$.  Also, since $v \in N(a) \setminus N(z)$, it follows that $(z, a)$ is unsafe, so there is a path from $w$ to $v$ in $U(\MAX(v))$.  Therefore, by Lemma~\ref{lem:domination tree}, $v$ dominates $w$, a contradiction.
\end{proof}

Lemma~\ref{lem:C4 algorithm} yields Algorithm~\ref{alg:C4 fast}.  Its input is the graph $G$, and its output is $U(G)$ if $G$ is $C_4$-dominated, or a non dominating cycle otherwise.  Thus, Algorithm~\ref{alg:C4 fast} is just a replacement of Algorithm~\ref{alg:C4 slow}. We discuss its correctness and complexity in the next paragraphs.

\begin{algorithm}[htb!]
  \caption{Non-dominated $C_4$ of a triangle-free graph $G$.}\label{alg:C4 fast}
  \Input{a degree ordered graph $G$ with no triangles.}
  \Output{if $G$ is $C_4$-dominated, then $U(G)$; otherwise, a non-dominated $C_4$ of $G$.}

  \begin{AlgorithmSteps}
    \Step{Let $v_1 > \ldots > v_n$ be the vertices of $G$.}\label{alg:C4 fast:ordering}
    \Step{Set $UNSAFE := \emptyset$, $U$ $:=$ $(V(G), \emptyset)$ and $d^<(v) := |N(v)|$, for every $v \in V(G)$.}\label{alg:C4 fast:initialization}
    \Step{For $i := 1, \ldots, n$, do:}\label{alg:C4 fast:out loop begin}
    \begin{AlgorithmBlock}
      \Step{Set $REACH := \{w \in V(G) \mid d^<(w) >0 \text{ and there is a path from $w$ to $v_i$ in $U$}\}$.}\label{alg:C4 fast:path}
      \Step{For each $S \in \{(v_i,w,L) \in \MS \mid \text{either } |L| \neq d^<(w) \text{ or } (w \in UNSAFE \text{ and } w \not\in REACH)\}$.}\label{alg:C4 fast:domination begin}
      \begin{AlgorithmBlock}
        \Comment{This loop checks that every unsafe $S$ is dominated.}
        \Step{For each $a,b \in L(S)$ such that $a \to_S b$ do:}
        \begin{AlgorithmBlock}
          \Step{If $N^+_U(a) \not\subset \{b\}$, then output $v_i, a, w(S), b$ and halt.}\label{alg:C4 fast:halt 1}
          \Step{If $N^+_U(a) = \emptyset$ and $a$ is not dominated by $b$, then output $v_i, a, w(S), b$ and halt.}\label{alg:C4 fast:halt 2}
          \Step{Add $a \to b$ to $U$.}\label{alg:C4 fast:parent setting}
        \end{AlgorithmBlock}
        \Step{Set $UNSAFE := UNSAFE \cup L(S)$.}\label{alg:C4 fast:unsafe setting}\label{alg:C4 fast:domination end}
      \end{AlgorithmBlock}
      \Step{Set $d^<(w) := d^<(w)-1$, for every $w \in N(v_i)$.}\label{alg:C4 fast:degree setting}\label{alg:C4 fast:out loop end}
    \end{AlgorithmBlock}
    \Step{Output $U$.}\label{alg:C4 fast:unsafe digraph}
  \end{AlgorithmSteps}
\end{algorithm}

\subsection{Correctness of Algorithm~\ref{alg:C4 fast}}

The Loop \ref{alg:C4 fast:out loop begin}--\ref{alg:C4 fast:out loop end} examines the vertices in the order defined by $<$, beginning from the greater.  Step~\ref{alg:C4 fast:initialization} initializes some variables.  Fix $i \in \{1, \ldots, n\}$, and call $MAX$ $=$ $\MAX(v_i)$, and $MIN = \MIN(v_i)$.  Observe that, by Step~\ref{alg:C4 fast:ordering}, $MAX$ $=$ $\{v_1, \ldots, v_{i-1}\}$, and $MIN$ $=$ $\{v_{i+1}, \ldots, v_{n}\}$.  Immediately before Loop \ref{alg:C4 fast:out loop begin}--\ref{alg:C4 fast:out loop end} is executed for $v_i$, every $S \in \MS(MAX)$ is dominated, and the state of the variables is as follows:
\begin{enumerate}[$(i)$]
  \item $UNSAFE = \{v \in V(G) \mid v \in L(S) \text{ for some unsafe } S \in \MS(MAX)\}$.
  \item $U = U(MAX)$.  
  \item $d^<(v) = |N(v) \cap (MIN \cup \{v_i\})|$, for every $v \in V(G)$.
\end{enumerate}

Step~\ref{alg:C4 fast:path} finds all those $w \in V(G)$ such that there is a path from $w$ to $v_i$ in $U(MAX)$, and $d^<(w) > 0$.  Observe that if $d^<(w) = 0$, then, by $(iii)$, $L(v_i, w) = \emptyset$, thus $(v_i, w) \not\in \MS$.  Then, by  Lemma~\ref{lem:C4 algorithm}, Loop \ref{alg:C4 fast:domination begin}--\ref{alg:C4 fast:domination end} iterates every unsafe triple $S \in \MS$ whose high vertex is $v_i$.  This loop is the responsible for testing whether $S$ is dominated or not, and it has the following three alternatives.
\begin{description}
  \item[Alternative 1:] the algorithm halts at Step~\ref{alg:C4 fast:halt 1} while examining $a \to_S b$.  For this to happen, $N^+_U(a)$ must contain a vertex different than $b$, say $\mu(a)$.  Thus, by $(ii)$, both $b$ and $\mu(a)$ are out-neighbors of $a$ in $U(MAX \cup \{v_i\})$.  Therefore, $G$ is not $C_4$-dominated by Lemma~\ref{lem:domination tree}.
  \item [Alternative 2:] the algorithm halts at Step~\ref{alg:C4 fast:halt 2} while examining $a \to_S b$.  This alternative occurs when $a$ is not dominated by $b$, thus $S$ is not dominated. Therefore, $G$ is not $C_4$-dominated.  
  \item [Alternative 3:] the algorithm does not halt inside Loop \ref{alg:C4 fast:domination begin}--\ref{alg:C4 fast:domination end}.  In this last alternative, for every $a \to_S b$, either $N^+_U(a) = \{b\}$ or $a$ is dominated by $b$.  Whichever the case, by $(ii)$, $a$ is dominated by $b$ and $a \to b$ is an edge of $U(MAX \cup \{v\})$.  In particular, $S$ is dominated.
\end{description}
Therefore, if Loop \ref{alg:C4 fast:domination begin}--\ref{alg:C4 fast:domination end} halts if and only if $\MS(MAX \cup \{v_i\})$ contains a non-dominated triple.  Furthermore, if Loop \ref{alg:C4 fast:domination begin}--\ref{alg:C4 fast:domination end} does not halt, then by Step~\ref{alg:C4 fast:parent setting}, $(ii)$ is satisfied before the execution of the outer loop for $v_{i+1}$.  Finally, by Steps \ref{alg:C4 fast:unsafe setting}~and~\ref{alg:C4 fast:degree setting}, $(i)$ and $(iii)$ also hold immediately before the execution of Loop \ref{alg:C4 fast:out loop begin}--\ref{alg:C4 fast:out loop end} for $v_{i+1}$.  (Observe that Step~\ref{alg:C4 fast:halt 1} is superfluous, and it can be removed from the algorithm without affecting its correctness. However, its inclusion drops the time complexity required by the algorithm.  In some sense, it tells us that the domination of $a$ by $b$ was already tested.)

Algorithm~\ref{alg:C4 fast} gives its output in one of three steps.  Suppose that the algorithm halts at Step~\ref{alg:C4 fast:halt 1}, as in Alternative 1.  This happens because there is an unsafe triple $T$, already processed by the algorithm, such that $a \to_T \mu(a)$.  We claim that $b < \mu(a)$; otherwise, as in Lemma~\ref{lem:domination tree}, $T$ would not be dominated, contradicting the fact that the Algorithm~\ref{alg:C4 fast} stops immediately after it process a non-dominated triple.  Then, as in Lemma~\ref{lem:domination tree}, $a$ is not dominated by $b$.  Similarly, if Algorithm~\ref{alg:C4 fast} halts at Step~\ref{alg:C4 fast:halt 2}, as in Alternative 2, then $a$ is not dominated by $b$.  Therefore, in both of these alternatives, $v_i, a, w(S), b$ is a non-dominated $C_4$.  Finally, if Algorithm~\ref{alg:C4 fast} does not halt inside Loop \ref{alg:C4 fast:domination begin}--\ref{alg:C4 fast:domination end}, as in Alternative 3, then $U = U(G)$, by $(ii)$. Summing up, Algorithm~\ref{alg:C4 fast} is correct.

\subsection{Implementation and complexity of Algorithm~\ref{alg:C4 fast}}

The implementation of Algorithm~\ref{alg:C4 fast} is rather straightforward.  Recall that, by Lemma~\ref{lem:domination tree}, every vertex of $U$ has at most one out-neighbor.  We record such out-neighbors in a vector with $n$ positions, where the $i$-th position is $b$ if and only if $v_i \to_U b$.  If $N^+(v_i) = \emptyset$, then the $i$-th position of the array is some undefined value $\bot$.  Similarly, $d^<$ and $UNSAFE$ are also stored in vectors with $n$ positions; the $i$-th position respectively indicates the values of $d^<(v_i)$ and $v_i \in UNSAFE$.  We assume that $a$ can be inserted into $N^-_U(b)$ in $O(1)$ time at Step~\ref{alg:C4 fast:parent setting}, and that $a$ can be also removed from $N^-_U(b)$ in $O(1)$ time when $d^<(a)$ drops from $1$ to $0$ at Step~\ref{alg:C4 fast:degree setting}.  This can be achieved by storing $N^-_U(b)$, for every $b \in V(G)$, and a pointer from $a$ to its position in $N^-_U(b)$, for every $a \in N^-_U(b)$. 

Discuss the time complexity of the algorithm with the above implementation.  Fix $i \in \{1, \ldots, n\}$.  Before entering the Loop \ref{alg:C4 fast:out loop begin}--\ref{alg:C4 fast:out loop end} for $v_i$, $\MS(v_i)$ is obtained by executing one iteration of method $C4$.  As proved in~\cite{ChibaNishizekiSJC1985}, this iteration of $C4$ costs $O(\sum_{S \in \MS(v_i)}{|L(S)|})$ time.  For Step~\ref{alg:C4 fast:path}, just use a tree traversal algorithm, taking advantage that $N^-_U(b)$ is stored for every $b \in V(G)$.  Such a traversal takes $O(1)$ time per vertex traversed.  Observe that $w$ is traversed at Step~\ref{alg:C4 fast:path} if and only if $L(v_i, w) \neq \emptyset$, thus Step~\ref{alg:C4 fast:path} takes $O(|\MS(v_i)|)$ time.  Finally, Loop  \ref{alg:C4 fast:domination begin}--\ref{alg:C4 fast:domination end} takes $O(|L(S)|)$ time to examine each $S \in \MS(v_i)$.  Thus, this loop also requires $O(\sum_{S \in \MS(v_i)}{|L(S)|})$.  Finally, all the steps outside the Loop \ref{alg:C4 fast:out loop begin}--\ref{alg:C4 fast:out loop end} cost $O(n)$ time.  Therefore, as proven in~\cite{ChibaNishizekiSJC1985}, the time complexity of Algorithm~\ref{alg:C4 fast} is \[O\left(n+\sum_{v \in V(G)}\sum_{S \in \MS(v)}{|L(S)|}\right) = O(n+m\alpha(G)).\]

On the other hand, only \[O\left(n+\max_{v \in V(G)} \left\{\sum_{S \in \MS(v)}{|L(S)|}\right\}\right) = O(n+m)\] bits of additional space are used by the algorithm, thus the space complexity is $O(n+m)$.

\section{Faster recognition of hereditary biclique-Helly graphs}
\label{sec:C6 fast}

In this section we improve Algorithms \ref{alg:C5 slow}~and~\ref{alg:C6 slow} so as to run in $O(m\alpha(G) + n^2)$ time and $O(m)$ space.  Again, the idea is to evaluate if a given vertex $v$ belongs to an induced $C_5$ or an induced $C_6$, by looking at those vertices at distance at most $3$ from $v$.  In this section, however, we take advantage of the dominations implied by the squares family of $G$.  The implementation of the improvement of Algorithm~\ref{alg:C5 slow} for finding an induced $C_5$ is rather similar to the implementation of the improvement of Algorithm~\ref{alg:C6 slow} for finding an induced $C_6$.  So, we only describe in detail the improvement of Algorithm~\ref{alg:C6 slow}, and briefly discuss the improvement of Algorithm~\ref{alg:C5 slow}. The main tool in this section is a new forest that extends the unsafe domination digraph.  For the rest of this section, suppose that $G$ is a degree ordered graph that is $C_4$-dominated and has no triangles.

The \emph{squares domination digraph} of $G$ is the digraph $S(G)$ that is obtained from $U(G)$ by inserting an edge $w \to v$ for every safe $(v,w) \in \MS$.  Fix $v \in V(G)$.  Define $\sigma(v) = \min N^+_{S(G)}(v)$, when $d^+_{S(G)}(v) > 0$.  We write $\sigma(v) = \bot$ to indicate that $d^+_{S(G)}(v) = 0$.  For the sake of simplicity, we assume $\bot > \max V(G)$.  

The following lemma is analogous to Lemma~\ref{lem:C6 slow}.  It shows how to find an induced $C_6$, when one such cycle exists.  Observe the role that $\sigma$ plays by marking that $v_2$ and $v_4$ are not dom-comparable (condition $(ii)$), and that $v_1$, $v_3$, and $v_5$ are neither dom-comparable (condition $(iii)$).  Indeed, $\sigma(v_i)$ was chosen as the minimum that dominates $v_i$ in either $U(G)$ or in a safe triple.  So, if $v < \sigma(w)$, and $v$ and $w$ share a $C_4$, then $v$ cannot dominate $w$.

\begin{lemma}\label{lem:C6 fast}
  Let $G$ be a degree ordered graph that is $C_4$-dominated and contains no triangles.  Then, $G$ contains an induced $C_6$ if and only if it contains two paths $v_0, v_1, v_2, v_3$ and $v_0, v_5, v_4, v_3$ such that
  \begin{enumerate}[$(i)$]
    \item $v_2 \neq v_4$, $v_3 \not \in N(v_0)$ and $v_0 > \max\{v_1, v_2, v_4, v_5\}$,
    \item $\sigma(v_2) > v_0$ and $\sigma(v_4) > v_0$, and
    \item $\sigma(v_1) = \sigma(v_3) = \sigma(v_5)$.
  \end{enumerate}
  Furthermore, if $G$ contains the paths $v_0, v_1, v_2, v_3$ and $v_0, v_5, v_4, v_3$ satisfying $(i)$--$(iii)$, then $v_0, \ldots, v_5$ induce a cycle in $G$.
\end{lemma}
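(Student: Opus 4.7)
I would split the proof into the ``easy'' direction $(\Longleftarrow)$ together with the ``furthermore'', and then the harder direction $(\Longrightarrow)$.

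For $(\Longleftarrow)$, I would examine the closed walk $v_0 v_1 v_2 v_3 v_4 v_5 v_0$.  Triangle-freeness and $(i)$ rule out every potential chord and every non-trivial coincidence among the $v_i$ except the two chords $v_1 v_4$ and $v_2 v_5$ (and, degenerately, $v_1 = v_5$, which I would handle by the same kind of argument that rules out the chords).  To exclude $v_2 v_5$, assume it is an edge, so $v_0 v_1 v_2 v_5$ is an induced $C_4$.  By $(i)$, $v_3 \in N(v_2) \setminus N(v_0)$, so $v_0$ and $v_2$ are not dom-comparable, and $C_4$-dominance forces $v_1$ and $v_5$ to be.  The triple $(v_0, v_2)$ lies in $\MS$ and is unsafe, and both $v_1, v_5$ belong to $L(v_0, v_2)$.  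Let $a$ and $b$ be the smaller and the larger of $v_1, v_5$; then $a$ is not maximal in $L(v_0, v_2)$, so by Lemma~\ref{lem:domination tree} we get $a \to_{U(G)} a'$ for some $a' \in L(v_0, v_2)$ with $a < a' \leq b$, yielding $\sigma(a) \leq b$.  But $\sigma(b) > b$ and $(iii)$ gives $\sigma(a) = \sigma(b)$, so $\sigma(a) \leq b < \sigma(b) = \sigma(a)$, a contradiction.  The chord $v_1 v_4$ is excluded symmetrically via $(v_0, v_4) \in \MS$, which is unsafe by the same $v_3$-argument and whose $L$-set contains both $v_1$ and $v_5$ once $v_1 v_4$ is assumed.

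For $(\Longrightarrow)$, fix an induced $C_6$ $u_0, \ldots, u_5$, let $v_0$ be the largest vertex of the cycle under the degree ordering (WLOG $v_0 = u_0$), and define $v_2$ as the last iterate in the $\sigma$-chain starting at $u_2$ that is still $\leq v_0$, and analogously $v_4$ from $u_4$.  This gives $\sigma(v_2), \sigma(v_4) > v_0$ (condition $(ii)$), and transitivity of domination makes $v_2$ dominate $u_2$ and $v_4$ dominate $u_4$; in particular $u_1, u_3 \in N(v_2)$ and $u_3, u_5 \in N(v_4)$, so the paths $u_0, u_1, v_2, u_3$ and $u_0, u_5, v_4, u_3$ exist, with $v_3 := u_3 \notin N(v_0)$ by induced-$C_6$-ness.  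To obtain $v_2 \neq v_4$: if $v_2 = v_4 = z$, then $z$ is adjacent to $u_1, u_3, u_5$ and non-adjacent to $u_0$, making $u_0 u_1 z u_5$ an induced $C_4$ whose only dom-comparable diagonal is $\{u_1, u_5\}$; the resulting domination then forces $u_1 u_4$ to be an edge, chording the original $C_6$.

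The main obstacle is condition $(iii)$.  Taking $v_1 := u_1$, $v_3 := u_3$, $v_5 := u_5$ need not give $\sigma(v_1) = \sigma(v_3) = \sigma(v_5)$.  My plan here is to iterate $\sigma$ on $v_1, v_3, v_5$ as well, observing that each replacement $v_i \mapsto \sigma(v_i)$ preserves the required adjacencies because the new vertex dominates the old and therefore inherits all edges to $v_0$, $v_2$, $v_4$ needed to keep the two paths.  I expect the three iterations to be forced to converge to a common value: any persistent divergence between, say, the $\sigma$-images of $u_1$ and $u_5$ would place two distinct dominators in the common neighbourhood of $u_0$, which, together with $v_0$, $v_2$, or $v_4$, should produce either a triangle or a non-dominated $C_4$, both forbidden.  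Making this convergence rigorous, while at the same time ensuring that the lifted $v_1, v_5$ stay strictly below $v_0$ so that $(i)$ is preserved, is the delicate technical heart of the proof.
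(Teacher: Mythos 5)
Your backward direction is essentially correct and even slightly streamlines the paper's argument: by always working with a $C_4$ through $v_0$ (namely $v_0,v_1,v_2,v_5$ or $v_0,v_1,v_4,v_5$), the representing triple automatically has high vertex $v_0$ and is unsafe because $v_3\in N(v_2)\setminus N(v_0)$, so you avoid the paper's case split on whether the high vertex of the triple representing $v_1v_2v_3v_4$ lies in $\{v_2,v_4\}$ or in $\{v_1,v_3\}$. The one place you still need that case split is the degenerate situation $v_1=v_5$, where both of your $C_4$'s through $v_0$ collapse and the only available cycle is $v_1v_2v_3v_4$ itself; it is handled by the same kind of $\sigma$-inequalities, but it is not literally the same argument, and your exclusion of the chord $v_1v_4$ cannot subsume it because that exclusion presupposes $v_1\ne v_5$. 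The first half of your forward direction is also sound: lifting $u_2$ and $u_4$ along their $\sigma$-chains to the last iterate below $v_0$ yields $(ii)$, and your triangle/non-dominated-$C_4$ argument for $v_2\ne v_4$ works.

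The genuine gap is condition $(iii)$, which you correctly identify as the heart of the matter but do not prove, and your proposed route does not obviously close it. Iterating $\sigma$ on $v_1,v_3,v_5$ runs into the obstacles you name plus one more: replacing $v_3$ by $\sigma(v_3)$ can destroy $v_3\notin N(v_0)$; replacing $v_1$ or $v_5$ can push them above $v_0$; and nothing in the sketch forces the three chains to meet (for instance, if $\sigma(u_1)=\bot$ while $\sigma(u_3)\ne\bot$, no amount of iteration on $u_1$ helps). The paper's resolution is a different, global extremal choice rather than a local lifting: define $r(v)$ as the number of vertices reachable from $v$ in $S(G)$, and among all induced $C_6$'s pick one whose vertices all lie in $\MIN(v_0)\cup\{v_0\}$ for the largest possible $v_0$ and which minimizes $R=\sum_i r(v_i)$. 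For that cycle no modification is needed at all: if some $\sigma(v_i)$ would violate $(ii)$ or $(iii)$, then the substitution $v_i\mapsto\sigma(v_i)$ would produce an induced $C_6$ of smaller $R$ unless it creates a chord, and triangle-freeness pins that chord down to an adjacency with the unique opposite vertex; the resulting $C_4$'s, whose other diagonal is a non-dom-comparable pair of the original cycle, force $\sigma(v_i)$ to dominate $v_0$ (contradiction, for $i$ even) or to dominate all of $v_1,v_3,v_5$ simultaneously (for $i$ odd), which is what ultimately gives $\sigma(v_1)=\sigma(v_3)=\sigma(v_5)$. Without this extremal choice, or an equivalent mechanism guaranteeing convergence, your forward direction is incomplete.
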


\begin{proof}
  Suppose first that $G$ contains an induced $C_6$.  For each vertex $v$, define $r(v)$ as the number of vertices reached from $v$ in $S(G)$.  For each $C \subseteq V(G)$, define $R(C) = \sum_{v \in C}{r(v)}$.

  From among all the induced $C_6$'s of $G$, chose the cycle $v_0, \ldots, v_5$ such that:
  \begin{enumerate}[(1)]
    \item there is no induced $C_6$ of $G$ containing a vertex in $\MAX(v_0)$, and
    \item $R(\{v_0, \ldots, v_5\})$ is minimum among those cycles containing $v_0$.
  \end{enumerate}
  We claim that paths $v_0, v_1, v_2, v_3$ and $v_0, v_5, v_4, v_3$ satisfy $(i)$--$(iii)$.  Clearly, $v_2 \neq v_4$, $v_3 \not\in N(v_0)$, and $v_0 > \max\{v_1, \ldots, v_5\}$, thus $(i)$ follows.

  Suppose, contrary to statement $(ii)$, that $\sigma(v_2) \leq v_0$.  Notice that $v_2$ is not dominated by $v_0$, so $\sigma(v_2) < v_0$.  Then, $\sigma(v_2) \neq \bot$, hence $\sigma(v_2)$ dominates $v_2$.  By (2), $v_0, v_1, \sigma(v_2), v_3, v_4, v_5$ do not induce a cycle in $G$, because $r(v_2) > r(\sigma(v_2))$. Thus, as $G$ is triangle-free, we obtain that $\sigma(v_2)$ is adjacent to $v_1$, $v_3$, and $v_5$, hence, $v_0, v_1, \sigma(v_2), v_5$ is a $C_4$.  Now, since $G$ is $C_4$-dominated and $v_1$ and $v_5$ are not dom-comparable, it follows that $\sigma(v_2)$ dominates $v_0$.  However, this contradicts the fact that $\sigma(v_2) < v_0$.  Similar argument prove that $\sigma(v_4) > v_0$, so $(ii)$ follows.

  Finally, consider statement $(iii)$.  Clearly, $(iii)$ is true when $\sigma(v_1) = \sigma(v_3) = \sigma(v_5) = \bot$.  Suppose, then, that $\sigma(v_i) \neq \bot$, for some $i \in \{1,3,5\}$.  In this case, $\sigma(v_i)$ dominates $v_i$, by definition of $\sigma$.  Again, by (2), we obtain that $\sigma(v_i)$ must be adjacent to $v_0, v_2$, and $v_4$, or otherwise $v_i$ could be replaced by $\sigma(v_i)$ so as to obtain an induced $C_6$ with lower value of $R$.  Then, $\{v_0, v_1, v_2, \sigma(v_i)\}$, $\{v_0, v_5, v_4, \sigma(v_i)\}$, and $\{v_2, v_3, v_4, \sigma(v_i)\}$ all induce $C_4$'s.  Since $G$ is $C_4$-dominated and $v_0, v_2, v_4$ are pairwise not dom-comparable, it follows that $\sigma(v_i)$ dominates $v_1$, $v_3$, and $v_5$.  Thus, none of $\sigma(v_1)$, $\sigma(v_3)$, and $\sigma(v_5)$ is equal to $\bot$.  Repeating the above arguments for $i$ $=$ $1$, $3$, and $5$, we obtain that $\sigma(v_i)$ dominates $v_j$ for every $i,j \in \{1,3,5\}$.  Then $(iii)$ follows, because $\sigma(v) = \min N^+_{S(G)}(v)$, for every $v \in V(G)$.

  For the converse, suppose that there are two paths $v_0, v_1, v_2, v_3$, and $v_0, v_5, v_4, v_3$ that satisfy $(i)$--$(iii)$.   Observe that it is enough to prove that $v_1 \not\in N(v_4)$ and $v_5 \not\in N(v_2)$.  Indeed, in this case $v_1 \neq v_5$, thus $v_0, \ldots, v_5$ is a cycle of $G$ and, as $G$ is triangle-free, this cycle is induced.

  Suppose, to obtain a contradiction, that $v_1 \in N(v_4)$.  Then, $v_1$, $v_2$, $v_3$, $v_4$ is a $C_4$ in $G$.  This cycle is represented by some $S \in \MS$.  Consider the following cases.
  \begin{description}
    \item[Case 1:] $v(S) \in \{v_2, v_4\}$.  Observe that $w(S) \in \{v_2,v_4\}$, so $\sigma(w(S)) > v_0 > v(S)$ by $(i)$ and $(ii)$.  Then, $v(s)$ does not dominate $w(S)$, thus $S$ is an unsafe triple of $\MS$.  Consequently, since $v_1, v_3 \in L(S)$ and $v_0 \in N(v_1) \setminus N(v_3)$, it follows that $v_1$ dominates $v_3$, thus there is a path from $v_3$ to $v_1$ in $U(G)$.  Then, $\sigma(v_1) > v_1 \geq \sigma(v_3)$, a contradiction.
    \item[Case 2:] $v(S) \not\in \{v_2,v_4\}$.  In this case, both $v_2, v_4$ belong to $L(S)$, thus, as $\sigma(v_2) > v_0 > v_4$ and $\sigma(v_4) > v_0 > v_2$, we obtain that $S$ must be safe.  Consequently, since $v_0 \in N(v_1) \setminus N(v_3)$, it follows that $S = (v_1, v_3)$.  This is again impossible, because $\sigma(v_1) > v_1 \geq \sigma(v_3)$.
  \end{description}
  Replacing $v_1$ by $v_5$ above, we obtain that $v_5 \not\in N(v_2)$ as well.  Thus, $v_0, \ldots, v_5$ is an induced cycle of $G$.
\end{proof}

Lemma~\ref{lem:C6 fast} implies Algorithm~\ref{alg:C6 fast}.  Its input is the graph $G$, and its output is a message if $G$ contains no induced $C_6$, or an induced $C_6$ otherwise.  Thus, Algorithm~\ref{alg:C6 fast} is just a replacement of Algorithm~\ref{alg:C6 slow}. We discuss the algorithm, its correctness, and its complexity in the next paragraphs.

\begin{algorithm}
  \caption{Induced $C_6$ of a triangle-free $C_4$-dominated graph.}\label{alg:C6 fast}
  \Input{a degree ordered graph $G$ that is $C_4$-dominated and has no triangles.}
  \Output{if existing, an induced $C_6$ of $G$; otherwise, a message.}

  \begin{AlgorithmSteps}
    \Step{Let $v_1 > \ldots > v_n$ be the vertices of $G$.}
    \Step{Compute $\sigma(v)$, for every $v \in V(G)$.}\label{alg:C6 fast:S(G)}
    \Step{For each $vw \in E(G)$, set $X(v,w) := \{z \in N(v) \mid \sigma(z) = w\}$ and $X(w,v) := \{z \in N(w) \mid \sigma(z) = v\}$.}\label{alg:C6 fast:X}
    \Step{For $i := 1$ to $n$, do:}\label{alg:C6 fast:OutLoopBegin}
    \begin{AlgorithmBlock}
      \Step{Set $N_1 := \{w_1 \in N(v_i) \mid w_1 < v_i\}$.}\label{alg:C6 fast:N1}
      \Step{Set $N_2 := \bigcup_{w_1 \in N_1}\{w_2 \in N(w_1) \mid \sigma(w_2) > v_i > w_2\}$.}\label{alg:C6 fast:N2}
      \Step{For each $w_2 \in N_2$, set $N_3(w_2) := \bigcup\{X(w_2, \sigma(w_1)) \setminus N_1 \mid w_1 \in N_1 \cap N(w_2)\}$.}\label{alg:C6 fast:N3}
      \Step{For $w_2, w_4 \in N_2$ ($w_2 \neq w_4$), if $N_3(w_2) \cap N_3(w_4)$ contains a vertex, say $w_3$, then:}\label{alg:C6 fast:condition}
      \begin{AlgorithmBlock}
        \Step{Output $v_i, w_1, w_2, w_3, w_4, w_5$, for some $w_1, w_5 \in N_1$ such that $w_1 \neq w_5$ and $\sigma(w_1) = \sigma(w_5) = \sigma(w_3)$, and halt.}\label{alg:C6 fast:halt}\label{alg:C6 fast:OutLoopEnd}
      \end{AlgorithmBlock}
    \end{AlgorithmBlock}
    \Step{Output ``$G$ contains no induced $C_6$'s.''}
  \end{AlgorithmSteps}
\end{algorithm}

\subsection{Correctness of Algorithm~\ref{alg:C6 fast}}

Algorithm~\ref{alg:C6 fast} either halts at Step~\ref{alg:C6 fast:halt} or it runs until its termination.  In this paragraph, we prove that Algorithm~\ref{alg:C6 fast} halts at Step~\ref{alg:C6 fast:halt} if and only if $G$ contains an induced $C_6$.  Furthermore, when Algorithm~\ref{alg:C6 fast} halts at Step~\ref{alg:C6 fast:halt}, its output is an induced $C_6$ of $G$.

Suppose first that Algorithm~\ref{alg:C6 fast} halts at Step~\ref{alg:C6 fast:halt}, when the $i$-th iteration of Loop \ref{alg:C6 fast:OutLoopBegin}--\ref{alg:C6 fast:OutLoopEnd} is being executed. Examine the state of the variables immediately before Step~\ref{alg:C6 fast:halt} is executed.  By Steps~\ref{alg:C6 fast:X}, \ref{alg:C6 fast:N1}~and~\ref{alg:C6 fast:N3}, $w_3 \in N(w_2w_4) \setminus N(v_i)$, and there are two vertices $w_1 \in N(v_iw_2)$ and $w_5 \in N(v_iw_4)$ such that $\sigma(w_1) = \sigma(w_3) = \sigma(w_5)$.  By Steps \ref{alg:C6 fast:N1}~and~\ref{alg:C6 fast:N2}, $v_i > \max\{w_1, w_2, w_4, w_5\}$, so, by Lemma~\ref{lem:C6 fast}, $v_i, w_1, \ldots, w_5$ induce a $C_6$ in $G$.  

For the converse, suppose that $G$ contains an induced $C_6$.  By Lemma~\ref{lem:C6 fast}, there must be two paths $w_0, w_1, w_2, w_3$ and $w_0, w_5, w_4, w_3$ satisfying conditions $(i)$--$(iii)$ of the lemma.  Vertex $w_0$ gets some name $v_i$ in Algorithm~\ref{alg:C6 fast}, for some $1 \leq i \leq n$.  If Algorithm~\ref{lem:C6 fast} halts before the $i$-th iteration of Loop \ref{alg:C6 fast:OutLoopBegin}--\ref{alg:C6 fast:OutLoopEnd}, then there is nothing to prove.  So, suppose that the $i$-th iteration of Loop \ref{alg:C6 fast:OutLoopBegin}--\ref{alg:C6 fast:OutLoopEnd} is executed, and consider its effects.  By Step~\ref{alg:C6 fast:N1} and Lemma~\ref{lem:C6 fast}~$(i)$, $w_1, w_5 \in N_1$.  By Step~\ref{alg:C6 fast:N2} and Lemma~\ref{lem:C6 fast} $(i)$--$(ii)$, $w_2, w_4 \in N_2$.  By Step~\ref{alg:C6 fast:X} and Lemma~\ref{lem:C6 fast}~$(iii)$, $w_3$ belongs to both $X(w_2, \sigma(w_1))$ and $X(w_4, \sigma(w_5))$, thus, by Step~\ref{alg:C6 fast:N3} and Lemma~\ref{lem:C6 fast}~$(i)$, $w_3 \in N_3(w_2) \cap N_3(w_4)$.  Therefore, the condition of Step~\ref{alg:C6 fast:condition} is satisfied, and Algorithm~\ref{alg:C6 fast} halts at Step~\ref{alg:C6 fast:halt}.

For the furthermore part, observe that Step~\ref{alg:C6 fast:halt} always finds the vertices $w_1 \neq w_5$ such that $\sigma(w_1) = \sigma(w_5) = \sigma(w_3)$.  Indeed, such $w_1$ and $w_5$ exist as argued before. And, by Lemma~\ref{lem:C6 fast}, $v_i, w_1, \ldots, w_5$ induce $C_6$ in $G$.  Summing up, Algorithm~\ref{alg:C6 fast} is correct.

\subsection{Implementation and complexity of Algorithm~\ref{alg:C6 fast}}

The data structures involved in Algorithm~\ref{alg:C6 fast} are a little harder than those in Algorithm~\ref{alg:C4 fast}.  The input graph $G$ is implemented with adjacency lists, i.e., the list $N(v)$ is stored for each $v \in V(G)$.  While $N(v)$ is being iterated, say the next vertex is $w$, $O(1)$ time access to the position that $v$ occupies in $N(w)$ is required.  This can be achieved by keeping a pointer to this position paired with $w$ in $N(v)$.  The value of $\sigma(v)$ is stored together with the vertex $v$.  Given a vertex $v$, $O(1)$ time access to the set $children(v) = \{w \in V(G) \mid \sigma(w) = v\}$ is also required.  So, a list with the elements of $children(v)$ is stored together with $v$.  A list with the elements of $X(v,w)$ is stored for every $vw \in E(G)$.  The list $X(v,w)$ has to be obtained in $O(1)$ time while $w$ is being examined in a traversal of $N(v)$.  Thus, $X(v,w)$ is also paired with $w$ in $N(v)$.  Furthermore, for each $z \in X(v,w)$, while traversing $N(z)$ with $v$ as the next vertex, the list $X(v,w)$ has to be accessed in $O(1)$ time.  So, a reference $x(z,v)$ pointing to $X(v,w)$ is paired together with $v$ in the list $N(z)$.  Recall that $\{X(v,w)\}_{w \in N(v)}$ is a partition of $N(v)$; indeed, $z$ belongs only to $X(v,\sigma(z))$.  Thus there is a unique pointer $x(z,v)$ associated with $v$ in $N(z)$. Finally, sets $N_1, N_2$ are implemented as $n$-position vectors so that membership can be tested in $O(1)$ time.

Consider the time complexity of Algorithm~\ref{alg:C6 fast}.  To compute $\sigma$ at Step~\ref{alg:C6 fast:S(G)} we first set $\sigma(v)$ as the out-neighbor of $v$ in $U(G)$ by calling Algorithm~\ref{alg:C4 fast}; next, we traverse the family $\MS$ given by method $C4$ in~\cite{ChibaNishizekiSJC1985} so as to update the values of $\sigma$ accordingly.  All these steps take $O(m\alpha(G))$ time.  Next, by traversing $V(G)$, we compute $children(v)$ in $O(n)$ time.  For computing the sets $X(v,w)$ at Step~\ref{alg:C6 fast:X}, for a given $v \in V(G)$, we use a three step procedure.  First, we traverse $N(v)$ and set $X(v,w) = \emptyset$, for each $w \in N(v)$.  Second, we sort $N(v)$ according to the value of $\sigma$, i.e., $w$ appears before $z$ in $N(v)$ only if $\sigma(w) \geq \sigma(z)$.  Third, we traverse $N(v)$ once again and, for each $w \in N(v)$, we insert $w$ into $X(v, \sigma(w))$.  These steps take $O(d(v))$ time each; in particular, the third step can be done as in a merging procedure, because $N(v)$ is sorted according to the values of $\sigma$.  Therefore, Step~\ref{alg:C6 fast:X} takes $O(n+m)$ time.  Before executing Step~\ref{alg:C6 fast:OutLoopBegin}, we should update the values of the pointers $x(z,v)$.  For this, we traverse each $N(v)$ once again as in the merging step and, when $z$ is being traversed so as to be inserted in $X(v,w)$, we access the position of $v$ inside $N(z)$ in $O(1)$ time, and set $x(z,v)$ to point to $X(v,w)$.  Therefore, the update of the $x$ pointers takes $O(n+m)$ time as well.  To evaluate the time required by Loop \ref{alg:C6 fast:OutLoopBegin}--\ref{alg:C6 fast:OutLoopEnd}, consider a vertex $v_i \in V(G)$.  Step~\ref{alg:C6 fast:N1} requires only a traversal of $N(v_i)$, so it takes $O(d(v_i))$. For Step~\ref{alg:C6 fast:N2}, it is enough to traverse $N(w)$, for each $w \in N_1$, while $\sigma$ is accessed in $O(1)$ time.  Therefore, Step~\ref{alg:C6 fast} takes $O(\sum_{w \in N(v_i)}d(w))$ time.  Steps \ref{alg:C6 fast:N3}~and~\ref{alg:C6 fast:condition} are implemented together, as follows. For Step~\ref{alg:C6 fast:N3}, we traverse each $w_2 \in N(w_1) \cap N_2$, for every $w_1 \in N_1$, and mark every vertex $w_3 \in X(w_2, \sigma(w_1)) \setminus N_1$ with the value $(w_1, w_2)$.  The condition at Step~\ref{alg:C6 fast:condition} is true for $v_i$ if and only if some vertex $w_3$ is marked twice.  Recall that testing membership in $N_1$ and $N_2$ takes $O(1)$ time, while $X(w_2, \sigma(w_1))$ can be obtained in $O(1)$ time while examining $w_2$ in a traversal of $N(w_1)$ with the pointer $x(w_1,w_2)$ (notice that $w_1 \in X(w_2, \sigma(w_1))$ by definition, so $x(w_1, w_2)$ was previously recorded).  As each vertex outside $N_1$ is marked at most twice, the time required by Steps \ref{alg:C6 fast:N3}~and~\ref{alg:C6 fast:condition} is $O(n+\sum_{w_1 \in N_1}d(w_1))$.  Finally, if the condition at Step~\ref{alg:C6 fast:condition} is true, then we obtain a vertex $w_3$ that has two marks, say $(w_1, w_2)$ and $(w_5, w_4)$.  These marks indicate that $w_3 \in X(w_2, \sigma(w_1)) \cap X(w_4, \sigma(w_5))$, this $v_i, w_1, \ldots, w_5$ is a valid output for Step~\ref{alg:C6 fast:halt}.  Clearly, this step takes $O(1)$ time.  Summing up, by~\cite{ChibaNishizekiSJC1985}, Algorithm~\ref{alg:C6 fast} has time complexity
\[
  O\left(\alpha(G)m + \sum_{i = 1}^n\left(\sum \{d(w)+n \mid w \in N(v_i) \cap \MIN(v_i)\}\right)\right) = O(n^2 + \alpha(G)m).
\]

As for the space complexity, recall again that $w$ belongs only to $X(v,\sigma(w))$, for every edge $vw$.  Therefore, as each call to method $C4$ takes $O(m)$ space, Algorithm~\ref{alg:C6 fast} requires $O(n+m)$ bits.

\subsection{Finding an induced $C_5$ efficiently}

An induced $C_5$ can be found in $O(n^2 + m\alpha(G))$ time and linear space, if existing, with a procedure similar to Algorithm~\ref{alg:C6 fast}.  We omit the implementation details, that follow from the next lemma.

\begin{lemma}\label{lem:C5 fast}
  Let $G$ be a degree ordered graph that is $C_4$-dominated and contains no triangles.  Then $G$ contains an induced $C_5$ if and only if it contains a cycle $v_0, \ldots, v_4$  such that $\min\{\sigma(v_2), \sigma(v_3)\} > v_0 > \max\{v_2, v_3\}$.
\end{lemma}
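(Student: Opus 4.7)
The plan is to imitate the extremal argument used in the proof of Lemma~\ref{lem:C6 fast}, treating the two directions separately.

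For the converse direction, I first observe that in a triangle-free graph every closed walk of length $5$ on distinct vertices is induced: each of the five possible chords of a $C_5$ would produce a triangle with two consecutive cycle edges. Thus for any cycle $v_0,\ldots,v_4$ satisfying the hypothesis, triangle-freeness alone already yields an induced $C_5$, and the $\sigma$ conditions play no role in this direction.

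For the forward direction, I adopt the same extremal selection used in Lemma~\ref{lem:C6 fast}. Writing $r(v)$ for the number of vertices reachable from $v$ in $S(G)$ and $R(C)=\sum_{v\in C}r(v)$, I choose an induced $C_5$ whose largest vertex is greatest -- call it $v_0$ -- and, subject to that, with $R(\{v_0,\ldots,v_4\})$ minimum. Relabelling cyclically so that $v_0$ is the cycle's maximum yields $v_0>\max\{v_2,v_3\}$ at once. It remains to show $\sigma(v_2)>v_0$, the case of $v_3$ being symmetric. Assuming $\sigma(v_2)\le v_0$, the inducedness of the cycle gives $v_3\in N(v_2)\setminus N(v_0)$, so $v_0$ does not dominate $v_2$, forcing $\sigma(v_2)<v_0$ and $\sigma(v_2)\ne\bot$; hence $\sigma(v_2)$ dominates $v_2$ in $G$ and in particular is adjacent to both $v_1$ and $v_3$. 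The five vertices $v_0,v_1,\sigma(v_2),v_3,v_4$ are distinct (the only nontrivial collision, $\sigma(v_2)=v_4$, would force $v_1\in N(v_4)$ by dominance, contradicting the inducedness of the original $C_5$), so they form a $C_5$, and by the converse observation above this $C_5$ is induced.

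The main obstacle is the numerical comparison $R(\{v_0,v_1,\sigma(v_2),v_3,v_4\})<R(\{v_0,\ldots,v_4\})$. To obtain it I would verify that $S(G)$ is acyclic: every arc of $U(G)$ is directed from a smaller vertex to a larger one by Lemma~\ref{lem:domination tree}, and every arc $w\to v$ added from a safe triple $(v,w)\in\MS$ also satisfies $v>w$. Consequently $v_2$ is reachable from itself in $S(G)$ but not from $\sigma(v_2)$, which yields $r(\sigma(v_2))<r(v_2)$ and hence a strictly smaller $R$-value, contradicting the minimality in our choice. This contradiction forces $\sigma(v_2)>v_0$ and completes the proof.
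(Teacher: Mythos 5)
Your proof is correct: the paper in fact states Lemma~\ref{lem:C5 fast} without proof, saying only that the procedure is similar to that of Algorithm~\ref{alg:C6 fast}, and your argument is precisely the intended analogue of the proof of Lemma~\ref{lem:C6 fast} (extremal choice minimizing $R$, replacement of $v_2$ by $\sigma(v_2)$, and the observation that all arcs of $S(G)$ point from smaller to larger vertices so that $r(\sigma(v_2)) < r(v_2)$). Your additional remark that the converse direction needs only triangle-freeness, because every $C_5$ in a triangle-free graph is automatically induced, is also consistent with the paper's own discussion in Section~\ref{sec:simple recognition}.
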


The main theorem of this paper is the following corollary.

\begin{theorem}
  Hereditary biclique-Helly graphs can be recognized in $O(n^2+\alpha(G)m)$ time and $O(n+m)$ space.
\end{theorem}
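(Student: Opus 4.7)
The plan is to read the stated bounds directly off Algorithm~\ref{alg:HBH recognition}, replacing each of its four nontrivial steps with the fast subroutines built in Sections~\ref{sec:C4 fast} and \ref{sec:C6 fast}. Correctness comes for free from Theorem~\ref{thm:HBH recognition}, which says that a graph is hereditary biclique-Helly if and only if it is $C_4$-dominated and contains no triangle, $C_5$, or $C_6$; so it suffices to run those four tests in sequence and report ``HBH'' exactly when all of them succeed.

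First I would preprocess the input: compute a degree ordering of $V(G)$ by counting sort in $O(n+m)$ time, and store $G$ by adjacency lists with the cross-pointers used in Section~\ref{sec:C6 fast} (this costs $O(n+m)$ time and space). Step~\ref{alg:HBH recognition:triangle} is then handled by the Chiba--Nishizeki triangle-listing routine in $O(\alpha(G)m)$ time and $O(m)$ space~\cite{ChibaNishizekiSJC1985}; if a triangle is found we halt.

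Assuming no triangle, Step~\ref{alg:HBH recognition:C4} is executed by invoking Algorithm~\ref{alg:C4 fast}, which either exhibits a non-dominated $C_4$ or returns the unsafe domination digraph $U(G)$ in $O(n+\alpha(G)m)$ time and $O(n+m)$ space. In the latter case we proceed to the $C_5$ and $C_6$ tests, which both need the map $\sigma$; this is obtained from $U(G)$ with one additional pass through the squares family $\MS$ produced by method $C4$, again within the $O(\alpha(G)m)$ budget. Step~\ref{alg:HBH recognition:C6} is then carried out by Algorithm~\ref{alg:C6 fast}, and Step~\ref{alg:HBH recognition:C5} by the analogous procedure based on Lemma~\ref{lem:C5 fast}; each runs in $O(n^2+\alpha(G)m)$ time and $O(n+m)$ space as shown in Section~\ref{sec:C6 fast}.

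Summing the four pieces gives total time
\[
O(\alpha(G)m)+O(n+\alpha(G)m)+O(n^2+\alpha(G)m)+O(n^2+\alpha(G)m)=O(n^2+\alpha(G)m)
\]
and total space $O(n+m)$, since the data structures carried between phases ($U(G)$, the vector of $\sigma$-values, the $X(v,w)$ partitions, and the auxiliary $n$-bit flags) all fit in $O(n+m)$ bits and the transient $\MS(v)$ produced inside the $v$-iteration of method $C4$ is discarded before moving on. There is no real obstacle here: every nontrivial claim has already been proved earlier in the paper, so the argument amounts to checking that the resource bounds telescope, which they do.
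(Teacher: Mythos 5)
Your proposal is correct and is essentially the proof the paper intends: the theorem is stated there as a corollary whose justification is exactly the assembly you describe (Theorem~\ref{thm:HBH recognition} for correctness, Chiba--Nishizeki for triangles, Algorithm~\ref{alg:C4 fast} for non-dominated $C_4$'s, and the Section~\ref{sec:C6 fast} procedures for induced $C_5$'s and $C_6$'s), with the resource bounds summing as you state.
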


\section{Maximal bicliques of $C_4$-dominated graphs with no triangles}
\label{sec:bicliques}

In this section we focus on the problem of enumerating all the maximal bicliques of a $C_4$-dominated graph with no triangles.  At the end of this section, we also devote a paragraph to discuss some biclique problems on this class of graphs.

Observe that every pair of twin vertices of any graph $G$ belong to the same maximal bicliques.  Thus, every maximal set of twin vertices can be identified into one vertex so as to obtain a twin-free graph $H$.  Then, for every maximal biclique of $H$, we can replace its vertices with the set that identifies, so as to obtain a maximal biclique of $G$.  So, in this section we assume that $G$ is a degree ordered graph that is $C_4$-dominated and has no triangles nor twins.  

For each $v \in V(G)$, define $B(v) = \{N(v), Dom(v) \cup \{v\}\}$.  Observe that, since $G$ is triangle-free, $B(v)$ is a biclique of $G$.  The following theorem shows that $\{B(v) \mid v \in V(G)\}$ is precisely the family of maximal bicliques of $G$.

\begin{theorem}\label{thm:biclique}
  Let $G$ be a degree ordered graph that is $C_4$-dominated and has no triangles nor twins, and $B$ be a biclique of $G$.  Then, $B$ is a maximal biclique of $G$ if and only if $B = B(v)$ for some $v \in V(G)$.
\end{theorem}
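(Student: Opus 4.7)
The statement has two directions. For the ``if'' direction, I verify that $B(v) = \{N(v), Dom(v) \cup \{v\}\}$ is a biclique that admits no extension. Independence of $N(v)$ is immediate from triangle-freeness; independence of $Dom(v) \cup \{v\}$ follows because $v$ is non-adjacent to $Dom(v)$ by definition, and any edge between two vertices of $Dom(v)$ would form a triangle with any $x \in N(v)$. Bipartite completeness is immediate from the definitions of $N$ and $Dom$. For maximality, a vertex $y$ extending the $N(v)$-side must be adjacent to $v$ and hence belong to $N(v)$; a vertex $y$ extending the other side is non-adjacent to $v$ yet satisfies $N(v) \subseteq N(y)$, so $y \in Dom(v)$.

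For the converse, let $B = \{W_1, W_2\}$ be a maximal biclique; the goal is to find $v^* \in W_1 \cup W_2$ such that $N(v^*)$ equals the opposite side of $B$. I first try the minimum-degree vertex $v$ of $W_2$. If $N(v) \ne W_1$, pick $y \in N(v) \setminus W_1$; triangle-freeness places $y$ outside $W_1 \cup W_2$ and non-adjacent to $W_1$, and maximality of $B$ forces some $u \in W_2 \setminus \{v\}$ with $y \not\sim u$. For each distinct pair $w, w' \in W_1$, the quadruple $\{v, u, w, w'\}$ induces a $C_4$; by $C_4$-domination some opposite pair must be dom-comparable. The pair $\{v, u\}$ is infeasible: $u$ dominating $v$ would force $y \sim u$, while $v$ dominating $u$ combined with the minimum-degree choice of $v$ would force $N(u) = N(v)$ and hence produce twin vertices, both forbidden. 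Therefore every pair $\{w, w'\}$ in $W_1$ is dom-comparable, so $W_1$ is linearly ordered by domination and has a unique minimum $w^*$ (uniqueness by twin-freeness). I then switch the candidate to $v^* := w^*$ and claim $N(w^*) = W_2$. If instead some $z \in N(w^*) \setminus W_2$ existed, maximality would force a $w' \in W_1$ with $z \not\sim w'$; but then $w' \ne w^*$ and $w'$ dominates $w^*$, giving $z \in N(w^*) \subseteq N(w')$, a contradiction.

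The corner cases $|W_1| = 1$ or $|W_2| = 1$ are handled directly by maximality, since any would-be witness in $N(w^*) \setminus W_2$ or $N(v) \setminus W_1$ could be trivially added to the other side of $B$. Once $v^*$ with $N(v^*)$ equal to the opposite side is identified, the side of $B$ containing $v^*$ is shown to equal $Dom(v^*) \cup \{v^*\}$: the $\supseteq$ inclusion is immediate from $N(v^*) \subseteq N(u)$ for every $u$ on that side, and the $\subseteq$ inclusion invokes extension-by-maximality once more, using the independence of $Dom(v^*) \cup \{v^*\}$ (already proved in the forward direction) to verify that any stray $z \in Dom(v^*) \setminus (W_1 \cup W_2)$ could be added on the $v^*$-side. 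The main obstacle is the $C_4$-domination case analysis, where triangle-freeness, twin-freeness, the minimum-degree choice, and $C_4$-domination must all be combined to force the linear domination order on $W_1$.
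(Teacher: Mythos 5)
Your proof is correct and rests on the same essential ingredients as the paper's: applying $C_4$-domination to four-cycles formed by two vertices from each side of the biclique, using the degree-minimum choice together with twin-freeness to rule out the unwanted dom-comparable diagonal, and then closing off both sides by maximality. The paper's version is shorter --- it takes the degree-minima $v_1, v_2$ of \emph{both} sides simultaneously and observes that if neither side lies in $Dom(v_i) \cup \{v_i\}$ then $v_1, v_2, w_1, w_2$ is already a non-dominated $C_4$ --- whereas you reach the same conclusion by the longer route of an external witness $y$ forcing all of $W_1$ to be a domination chain (also, your $\subseteq$/$\supseteq$ labels in the final step are swapped, though both inclusions are in fact argued).
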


\begin{proof}
  Suppose first that $B$ is a maximal biclique of $G$, and let $\{B_1, B_2\}$ be its bipartition.  Fix $v_1 = \min B_1$ and $v_2 = \min B_2$.  To obtain a contradiction, suppose that neither $B_1 \subseteq Dom(v_1)$ nor $B_2 \subseteq Dom(v_2)$.  Then, there are two vertices $w_1 \in B_1$ and $w_2 \in B_2$ that do not dominate $v_1$ and $v_2$, respectively.  On the other hand, since $v_1 < w_1$ and $v_2 < w_2$, we obtain that $v_1$ and $v_2$ do not dominate $w_1$ and $w_2$, respectively.  But this implies that $v_1, v_2, w_1, w_2$ is a non-dominated $C_4$ of $G$, a contradiction.  Therefore, without loss of generality, we may assume that $B_1 \subseteq Dom(v_1)$.  Now, as $B$ is maximal, it follows that all the vertices that dominate $v_1$ are included in $B_1$ and all the neighbors of $v_1$ are included in $B_2$.  

  The converse is trivial.
\end{proof}

\begin{corollary}
  A $C_4$-dominated graph with no triangles has at most $n$ maximal bicliques.
\end{corollary}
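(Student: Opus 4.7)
The plan is to reduce to the twin-free case and then invoke Theorem~\ref{thm:biclique} directly. First, I would observe, as noted at the beginning of Section~\ref{sec:bicliques}, that any two twin vertices of $G$ belong to exactly the same maximal bicliques; thus, identifying each maximal class of twins into a single vertex produces a graph $H$ on $n' \leq n$ vertices whose maximal bicliques are in bijection with those of $G$ (the bijection replacing each vertex of $H$ by the twin class it represents, on whichever side of the bipartition it lies). In particular, $G$ and $H$ have the same number of maximal bicliques.

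Next, I would verify that $H$ inherits the hypotheses needed for Theorem~\ref{thm:biclique}: it has no triangles and it is $C_4$-dominated. Both are routine, since a triangle or a non-dominated $C_4$ in $H$ lifts (by choosing one representative vertex per twin class) to a triangle or a non-dominated $C_4$ in $G$, contradicting the assumptions on $G$.

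Having done this, Theorem~\ref{thm:biclique} applies to $H$: every maximal biclique of $H$ equals $B(v)$ for some $v \in V(H)$. Hence the number of maximal bicliques of $H$ is at most $|V(H)| = n' \leq n$, and therefore so is the number of maximal bicliques of $G$. There is no real obstacle here; the only mild subtlety is that the map $v \mapsto B(v)$ need not be injective, but this only helps, as we only need an upper bound.
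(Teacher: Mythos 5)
Your proof is correct and follows essentially the same route as the paper: the reduction to a twin-free graph $H$ is exactly the observation made at the start of Section~\ref{sec:bicliques}, and the bound then follows immediately from Theorem~\ref{thm:biclique} since each maximal biclique of $H$ is $B(v)$ for some $v \in V(H)$. Your extra check that $H$ inherits triangle-freeness and $C_4$-domination is a reasonable detail the paper leaves implicit.
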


Say that $v \in V(G)$ is a \emph{repeated} vertex if $|Dom(v)| = |N(w)|$ and $|N(v)| = |Dom(w)|$, for some $w \in N(v) \cap \MAX(v)$.  The following lemma shows how to avoid the listing of duplicate maximal bicliques.

\begin{lemma}\label{lem:repeated}
  Let $G$ be a degree ordered graph that is $C_4$-dominated and has no triangles nor twins, and $v \in V(G)$.  Then $B(v) = B(w)$ for some $w \in \MAX(v)$ if and only if $v$ is a repeated vertex.
\end{lemma}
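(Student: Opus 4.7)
The plan is to analyze the equality $B(v) = B(w)$ as an equality of unordered pairs of sets. Writing $B(v) = \{N(v), Dom(v)\cup\{v\}\}$ and $B(w) = \{N(w), Dom(w)\cup\{w\}\}$, there are only two matching patterns: either the straight pattern $N(v) = N(w)$ (and consequently $Dom(v)\cup\{v\} = Dom(w)\cup\{w\}$), or the crossed pattern $N(v) = Dom(w)\cup\{w\}$ and $Dom(v)\cup\{v\} = N(w)$. I would first dispose of the straight pattern by observing that $N(v) = N(w)$ with $v \neq w$ would make $v$ and $w$ false twins, contradicting the hypothesis that $G$ has no twins.

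For the forward direction, once we are in the crossed pattern, $w$ lies in the right-hand side of its own bipartition, so $w \in Dom(w)\cup\{w\} = N(v)$; similarly $v \in N(w)$, so $vw$ is an edge and $w \in N(v) \cap \MAX(v)$. Comparing cardinalities in the two set equalities, and using that the paper's domination relation is vacuously reflexive (since $N(v) \subseteq N(v)$ and $v$ is not adjacent to itself, so $v \in Dom(v)$ and $Dom(v)\cup\{v\} = Dom(v)$), I would obtain $|N(v)| = |Dom(w)|$ and $|N(w)| = |Dom(v)|$, which is exactly the definition of $v$ being repeated with witness $w$.

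For the converse, given a repeated vertex $v$ with witness $w$, I would verify the two crossed equalities directly. The easy containment $Dom(w)\cup\{w\} \subseteq N(v)$ follows from the fact that $w \in N(v)$ by hypothesis, while any other $u \in Dom(w)$ satisfies $N(w) \subseteq N(u)$ and hence $u \in N(v)$ because $v \in N(w)$. The cardinality hypothesis $|N(v)| = |Dom(w)|$ then upgrades this containment to an equality. The symmetric argument, interchanging the roles of $v$ and $w$, yields $N(w) = Dom(v)\cup\{v\}$. Taken together, these two equalities realize the crossed pattern, so $B(v) = B(w)$.

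The main delicate point is the cardinality bookkeeping, which depends on confirming that $v \in Dom(v)$ under the paper's formal convention; with that convention the numerical constraints $|Dom(v)| = |N(w)|$ and $|N(v)| = |Dom(w)|$ in the definition of a repeated vertex align exactly with what the crossed matching pattern forces. Every other step is a routine unpacking of the definitions of $B(\cdot)$ and $Dom(\cdot)$, with no need to invoke the $C_4$-dominated hypothesis or triangle-freeness beyond what is already built into Theorem~\ref{thm:biclique}.
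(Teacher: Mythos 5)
Your proof is correct and follows essentially the same route as the paper's: rule out the straight match $N(v)=N(w)$ via twin-freeness, and obtain the crossed equalities by combining the containments $Dom(w)\cup\{w\}\subseteq N(v)$ and $Dom(v)\cup\{v\}\subseteq N(w)$ with the cardinality hypotheses in the definition of a repeated vertex. Your explicit observation that the paper's definition makes domination reflexive, so that $Dom(v)\cup\{v\}=Dom(v)$ and the cardinalities align exactly, is a useful clarification of a point the paper leaves implicit, but it does not change the argument.
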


\begin{proof}
  If $B(v) = B(w)$ for some $w \in \MAX(v)$, then, by definition, either $N(v) = N(w)$ or $N(v) = Dom(w)$ and $Dom(v) = N(w)$.  Since $N(v) \neq N(w)$ because $G$ has no twins, the former is impossible.  For the converse, suppose that $v$ is a repeated vertex, i.e., $|Dom(v)| = |N(w)|$ and $|N(v)| = |Dom(w)|$ for some $w \in N(v) \cap \MAX(v)$.  Now, as every vertex in $Dom(v)$ is adjacent to $w$ and every vertex in $Dom(w)$ is adjacent to $v$, it follows that $Dom(v) = N(w)$ and $N(v) = Dom(w)$, thus $B(v) = B(w)$.
\end{proof}

Theorem~\ref{thm:biclique} and Lemma~\ref{lem:repeated} yield a simple $O(nm)$ time and $O(n^2)$ space algorithm for enumerating all the maximal bicliques of $G$.  First, find the set $R$ of repeated vertices by using the domination matrix; then, output $B(v)$ for every $v \in V(G) \setminus R$.  In the remaining of this section, we show an $O(m\alpha(G) + o)$ time and $O(m\alpha(G))$ space algorithm for enumerating all the maximal bicliques of $G$, where $o = \sum_{v \in V(G)}|B(v)|$ is the size of the output.  Our main tool is, in this case, a digraph that encodes all the dominations of the input graph as paths.  This digraph is obtained from the squares domination digraph $S(G)$ by adding all the missing dominations.  

Recall that $S(G)$ is obtained from $U(G)$ by inserting the edge $w \to v$, for every safe $(v,w) \in \MS$.  Say that $w \in V(G)$ is \emph{degenerated} when $w$ has no out-neighbors in $S(G)$ and $N(w) \subseteq \MAX(w)$.  We define the \emph{dominator set} of $w$ according to the following rules.  If $L(S) = \{w\}$ for some $S \in \MS$, then the dominator set of $w$ is empty.  Otherwise, the dominator set of $w$ is $N(z) \setminus \{w\}$, where $z = \min N(w)$.  The following lemma shows that degenerated vertices, together with their dominator sets, are all we need to build the domination digraph from $S(G)$. 

\begin{lemma}\label{lem:low vertex}
  Let $G$ be a degree ordered graph that is $C_4$-dominated and has no triangles nor twins, and $w \in V(G)$ be degenerated. Then, $w$ is dominated by $v \in V(G)$ if and only if $v$ belongs to the dominator set of $w$.  
\end{lemma}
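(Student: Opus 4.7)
The plan is to prove both directions of the iff. The forward direction is easy: if $v$ dominates $w$, then $z = \min N(w) \in N(w) \subseteq N(v)$ gives $v \in N(z)$, and $v \neq w$ since $v \not\sim w$; so $v \in N(z) \setminus \{w\}$ in either case. A key auxiliary fact I would establish first is that for degenerated $w$, every dominator $v$ of $w$ must satisfy $v < z$: otherwise $z$ would lie in $L(v, w) = N(w) \cap \MIN(v)$, so $(v, w)$ would be a safe triple in $\MS$, forcing $w \to v \in S(G)$ and contradicting the fact that $w$ has no out-neighbors in $S(G)$.

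For Case 1, where some $S \in \MS$ satisfies $L(S) = \{w\}$, I would show no $v$ can dominate $w$. Let $S = (y, y')$ with $y > y'$; since $w \in L(S)$, we have $y, y' \in N(w)$. Suppose for contradiction that $v$ dominates $w$; then $v \sim y$ and $v \sim y'$. If $v < y$, then $v$ is a common neighbor of $y$ and $y'$ lying below $y$, so $v \in L(y, y') = \{w\}$, whence $v = w$, a contradiction. If $v > y$, then $y$ is a common neighbor of $v$ and $w$ below $v$, so $L(v, w) \ni y$ is nonempty and $(v, w)$ is safe in $\MS$---again contradicting degeneratedness.

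For Case 2 (no such $S$), I would show every $v \in N(z) \setminus \{w\}$ dominates $w$. Triangle-freeness gives $v \not\sim w$ (else $v, z, w$ would form a triangle), so it suffices to show $v \sim y$ for every $y \in N(w) \setminus \{z\}$. Fix $y > z$ in $N(w)$ and consider $(y, z) \in \MS$, which contains $w$ in $L(y, z)$ and has $|L(y, z)| \geq 2$ by Case 2. If $(y, z)$ is safe, then $y$ dominates $z$ and $v \in N(z) \subseteq N(y)$. If $(y, z)$ is unsafe, $C_4$-domination forces $L(y, z)$ to be linearly ordered by domination; by degeneratedness $w$ is the maximum of $L(y, z)$, so picking some $w' \in L(y, z)$ with $w' < w$ (dominated by $w$), I would consider the $4$-cycle on $v, z, w', y$. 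Its $C_4$-domination, combined with twin-freeness and the unsafeness of $(y, z)$ (which excludes $y$ and $z$ being dom-comparable), forces $v \sim y$. The main obstacle is handling this unsafe subcase cleanly: verifying the $C_4$ is induced, choosing $w'$ so that $v \neq w'$, and showing that no admissible domination in the cycle is compatible with $v \not\sim y$---this will require leveraging the max-property of $w$ in $L(y, z)$ together with the consequence $v < z$ from the auxiliary fact to rule out parasitic dominations.
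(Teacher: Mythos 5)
Your forward direction, your Case 1, and the safe subcase of Case 2 are correct and essentially reproduce the paper's own argument: the paper likewise first shows that any dominator $v$ of $w$ satisfies $v < z$ (else $(v,w)$ would be a safe triple and $w$ would have an out-neighbor in $S(G)$), deduces that $v \in L(S)$ whenever $w \in L(S)$, and concludes both that no $L(S)$ can equal $\{w\}$ and that $v \in N(z)\setminus\{w\}$.

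The gap is the unsafe subcase of Case 2, and as proposed it is fatal. The four vertices $v, z, w', y$ do not span a $C_4$: the only guaranteed edges are $vz$, $zw'$ and $w'y$, while $zy$ and $vw'$ are non-edges by triangle-freeness (through $w$ and through $z$, respectively), so $\{v,z,w',y\}$ induces a path unless $vy \in E(G)$ --- which is exactly the conclusion you want. There is no cycle to which $C_4$-domination can be applied, so the step is circular. Worse, this subcase cannot be repaired by any argument that derives $v \sim y$: unsafeness of $(y,z)$ hands you a vertex $x \in N(z)\setminus N(y)$ with $x \neq w$ (since $w \in N(y)$), and that $x$ lies in $N(z)\setminus\{w\}$ yet fails to dominate $w$; hence the lemma can only hold if the unsafe subcase never occurs. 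That is precisely what the paper asserts (``since $w$ has no out-neighbors in $S(G)$ and $|L(a,z)|>1$, it follows that $(a,z)$ is safe''), but that inference is valid only when $w$ is not the maximum of $L(a,z)$, and your own correct observation --- that degeneratedness forces $w = \max L(y,z)$ for any unsafe $(y,z)$ containing $w$ --- shows it does not apply. The configuration is in fact realizable: take the bipartite graph with parts $\{w,u,x,q\}$ and $\{z,a,b\}$ in which $w$ is adjacent to $z,a,b$ while $u,x,q$ are adjacent to $\{z,a\}$, $\{z,b\}$, $\{a,b\}$ respectively, with degree ordering $u<x<q<w<z<b<a$. This graph is triangle-free, twin-free and $C_4$-dominated ($w$ dominates each of $u,x,q$), the three triples of $\MS$ are all unsafe with $L$-sets $\{u,w\}$, $\{x,w\}$, $\{q,w\}$, so $w$ is degenerated and no $L(S)$ equals $\{w\}$; its dominator set is therefore $N(z)\setminus\{w\} = \{u,x\}$, yet nothing dominates $w$. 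So the obstacle you flagged as ``the main obstacle'' is genuine --- it is a gap in the paper's proof as well, and on such examples the stated equivalence fails --- but your proposed route through a $C_4$ on $v,z,w',y$ does not close it.
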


\begin{proof}
  Suppose first that $w$ is dominated by $v \in V(G)$, and let $z = \min N(w)$.  Observe that $v < z$; otherwise, $(v,w)$ would be a safe triple, contradicting the fact that $w$ has no out-neighbors in $S(G)$.  Then, since $v$ dominates $w$, it follows that $v \in L(S)$ for every $S \in \MS$ such that $w \in L(S)$.   Therefore, $L(S) \neq \{w\}$ for every $S \in \MS$, thus the dominator set of $w$ is $N(z) \setminus \{w\}$. Hence, $v$ belongs to the dominator set of $w$.

  For the converse, suppose that $v$ belongs to the dominator set of $w$, and call $z = \min N(w)$.  By definition, $v \in N(z)$.  Note that, if $N(w) = \{z\}$, then $w$ is dominated $v$.  Suppose then that $d(w) > 1$, and take $a \in N(w) \setminus \{z\}$.  Since $N(w) \subseteq \MAX(w)$, we obtain that $a > z > w$, thus $(a, z) \in \MS$.  Also, $L(a,z) \neq \{w\}$, because otherwise the dominator set of $w$ would be empty, and this cannot happen as it contains $v$.  Furthermore, since $w$ has no out-neighbors in $S(G)$ and $|L(a,z)| > 1$, it follows that $(a, z)$ is a safe triple of $\MS$.  Therefore, $a$ dominates $z$, hence $a$ is adjacent to $v$.  Since $a$ is any vertex in $N(w) \setminus \{z\}$, it follows that $v$ is adjacent to all the vertices in $N(w)$, hence $v$ dominates $w$.
\end{proof}

The \emph{domination digraph} of $G$ is the digraph $D(G)$ that is obtained from $S(G)$ by inserting an edge $w \to v$, for every degenerated $w \in V(G)$ and every $v$ in the dominator set of $w$. As its name indicates, $D(G)$ encodes all the dominations in $G$, as the following theorem resumes.

\begin{theorem}\label{thm:domination}
  Let $G$ be a degree ordered graph that is $C_4$-dominated and has no triangles nor twins, and $v,w \in V(G)$.  Then, $v$ dominates $w$ in $G$ if and only if there is a path from $w$ to $v$ in $D(G)$.  Furthermore, if $v$ dominates $w$ and $w \not\to_{D(G)} v$, then there is a path from $w$ to $v$ in $U(G)$.
\end{theorem}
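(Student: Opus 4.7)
The plan is to handle the two implications separately, with the ``furthermore'' clause folded into the reverse direction.  For the ``path $\Rightarrow$ domination'' direction, I would check that every individual arc $a \to b$ of $D(G)$ encodes a single dominance of $a$ by $b$.  Each such arc is of one of three types: a $U(G)$-arc, in which case $b$ dominates $a$ by Lemma~\ref{lem:domination tree}$(ii)$; an arc coming from a safe triple $(b,a) \in \MS$, in which case $b$ dominates $a$ by the very definition of safeness; or an arc added by the degeneracy pass, in which case $b$ dominates $a$ by Lemma~\ref{lem:low vertex}.  Dominance is transitive: if $b$ dominates $a$ and $c$ dominates $b$, then $N(a) \subseteq N(b) \subseteq N(c)$, and $c \not\sim a$ because otherwise $c \in N(a) \subseteq N(b)$ would contradict $c \not\sim b$.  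Hence any $D(G)$-path from $w$ to $v$ witnesses $v$ dominating $w$.

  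For the reverse direction I suppose $v$ dominates $w$ and split into three subcases.  If $L(v,w) \neq \emptyset$, then $(v,w) \in \MS$ is safe and the arc $w \to v$ sits in $S(G) \subseteq D(G)$.  If $L(v,w) = \emptyset$ and $w$ is degenerated, Lemma~\ref{lem:low vertex} places $v$ in the dominator set of $w$, and the degeneracy pass adds $w \to v$ to $D(G)$.  Both subcases produce a direct arc, so the premise of the ``furthermore'' is vacuous.  The remaining subcase, $L(v,w) = \emptyset$ with $w$ non-degenerated, is the heart of the argument, and I plan to exhibit a path from $w$ to $v$ living entirely inside $U(G)$, which also settles the ``furthermore''.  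From $L(v,w) = \emptyset$ I obtain $N(w) \subseteq \MAX(v) \subseteq \MAX(w)$, and non-degeneracy then forces $|N(w)| \geq 2$.  Picking two distinct $v_h > w_l$ in $N(w)$, the triple $S := (v_h,w_l)$ belongs to $\MS$ and contains both $w$ and $v$ in $L(S)$, since $v_h,w_l \in N(w) \subseteq N(v)$ and $v < v_h$.  When $S$ is unsafe, the $U(G)$-arcs inside $L(S)$ run through its elements in increasing degree order, and by Lemma~\ref{lem:domination tree} every such target dominates its source; hence the subsequence from $w$ up to $v$ is the desired $U(G)$-path, and every intermediate vertex is automatically dominated by $v$.

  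The main obstacle I anticipate is the possibility that every candidate $S = (v_h,w_l)$ with both $v_h,w_l \in N(w)$ is safe, equivalently that $N(w)$ is totally ordered under dominance.  To resolve this I would apply the $C_4$-dominated hypothesis to a $4$-cycle formed by $v$, $w$, and two well-chosen neighbors of $w$, using the absence of triangles and twins to promote a safe candidate into a genuine unsafe triple living in a nearby neighborhood.  Packaging the whole argument as an induction on the position of $w$ in the degree ordering (from largest to smallest) should then allow the theorem to be invoked on an intermediate vertex whenever the direct $L(S)$ construction falls short.
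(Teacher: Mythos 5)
Your forward direction (every arc of $D(G)$ is a single domination, then transitivity) and your first two subcases of the reverse direction are sound and essentially match the paper, which organizes the reverse direction instead around whether $w$ is degenerated, has a neighbor in $\MIN(w)$, or has an out-neighbor in $U(G)$ versus $S(G)$. The gap is in your third subcase, and it is concrete. First, non-degeneracy of $w$ does \emph{not} force $|N(w)| \geq 2$: a vertex $w$ with a single neighbor fails to be degenerated as soon as it is the low vertex of some safe triple $(a,w) \in \MS$, which only needs one common neighbor of $a$ and $w$ below $a$. Second, the obstacle you flag at the end (every candidate $S$ with $v,w \in L(S)$ being safe, or no such $S$ existing) cannot be overcome, because the statement actually fails in that configuration. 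Consider the tree with edges $uv$, $vz$, $zw$, $za$, $ab_1$, $ab_2$, $b_2c$, ordered by degree with the tie between the two degree-$3$ vertices broken so that $z < a$ (hence $w < v < z < a$). This graph is triangle-free, vacuously $C_4$-dominated, and twin-free. Here $v$ dominates $w$ since $N(w) = \{z\} \subseteq \{u,z\} = N(v)$, but $L(v,w) = \emptyset$ because $z > v$; the triple $(a,w,\{z\})$ is safe, so $w$ is not degenerated and its unique out-neighbor in $D(G)$ is $a$; and $a$ has no out-neighbors in $D(G)$, so there is no path from $w$ to $v$.

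You should know that the paper's own proof stumbles at the same spot. Its Alternative 2 deduces ``$L(v,w) \neq \emptyset$'' from ``$v$ is adjacent to all the vertices in $L(z,w)$,'' which tacitly assumes that some common neighbor of $v$ and $w$ lies in $\MIN(v)$; in the example above the unique common neighbor $z$ satisfies $z > v$ and the deduction collapses. So neither your plan nor the paper's argument closes the case $L(v,w) = \emptyset$ with $w$ non-degenerated, and the theorem as stated appears to need either a repaired definition of $D(G)$ or an extra hypothesis. Your instinct to isolate this case as the heart of the argument was right; but the proposed repair (promoting a safe candidate to an unsafe triple via the $C_4$-dominated hypothesis, wrapped in an induction on the degree order) is only a sketch, and no argument of that shape can succeed against the counterexample.
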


\begin{proof}
  Suppose that $w$ is dominated by $v$.  If $w$ is degenerated, then $w \to v$ is an edge of $D(G)$, by Lemma~\ref{lem:low vertex}.  Otherwise, either $w$ has a neighbor in $\MIN(w)$ or $w$ has an out-neighbor in $S(G)$.  Consider these alternatives:
  \begin{description}
    \item [Alternative 1:] $w$ is adjacent to $z \in \MIN(w)$.  In this case, $z \in L(v,w)$, thus $(v,w)$ is a safe triple of $\MS$.  Therefore, $v$ is an out-neighbor of $w$ in both $S(G)$ and $D(G)$.
    \item [Alternative 2:] $N(w) \subseteq \MAX(w)$ and $d^+_{U(G)}(w) = 0$.  In this case, $w$ has an out-neighbor $z$ in $S(G)$ such that $(z,w)$ is a safe triple of $\MS$.  Then, as $v$ dominates $w$, it follows that $v$ is adjacent to all the vertices in $L(z,w)$.  Therefore, $L(v,w) \neq \emptyset$, thus $(v,w)$ is a safe triple of $\MS$.  Consequently, $w \to v$ is an edge of both $S(G)$ and $D(G)$.
    \item [Alternative 3:] $w$ has an out-neighbor in $U(G)$, say $z$.  For this to happen, there must be an unsafe triple $S \in \MS$ such that $L(S)$ contains both $w$ and $z$.  If $v(S) > v$, then $L(S)$ contains also $v$, and so there is a path from $w$ to $v$ in both $U(G)$ and $D(G)$.  Otherwise, $v(S) < v$ and, since $v$ dominates $w$, it follows that $v(S) \in L(v,w)$.  Hence, $(v,w)$ is a safe triple of $\MS$, and $w \to v$ is an edge of both $S(G)$ and $D(G)$.
  \end{description}

  The converse follows from Lemmas \ref{lem:domination tree}~and~\ref{lem:low vertex} and the definition of $D(G)$, while the furthermore statement follows from the alternatives above. 
\end{proof}

\begin{corollary}\label{cor:domination}
  Let $G$ be a degree ordered graph that is $C_4$-dominated and has no triangles nor twins.  For every $v \in V(G)$, $Dom(v)$ is equal to the set of ancestors of $v$ in $D(G)$.
\end{corollary}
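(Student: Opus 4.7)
The plan is to reduce the corollary directly to the main statement of Theorem~\ref{thm:domination}. By convention, the set of ancestors of $v$ in the digraph $D(G)$ is the set of vertices $u \in V(G)$ such that there exists a directed path from $v$ to $u$ in $D(G)$ (with $v$ itself typically excluded, matching the fact that $v \notin Dom(v)$). Thus the statement to prove amounts to the equivalence: for every $u \in V(G)$, $u \in Dom_G(v)$ if and only if there is a directed path from $v$ to $u$ in $D(G)$.

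The key step is to invoke Theorem~\ref{thm:domination} with the pair $(v, w)$ of the theorem taken to be $(u, v)$. The theorem states that $u$ dominates $v$ in $G$ if and only if there is a path from $v$ to $u$ in $D(G)$, which is exactly the required equivalence. Quantifying over all $u \in V(G)$ then yields the claimed equality of sets.

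I expect no genuine obstacle: the corollary is just a pointwise restatement of the main statement of Theorem~\ref{thm:domination}, where the dominator $u$ is now allowed to range over all of $V(G)$ while the dominated vertex $v$ is fixed. The only thing worth spelling out, for completeness, is the convention being used for ``ancestors'' (i.e., whether $v$ itself is counted); since $v$ is not adjacent to itself and hence $v \notin Dom_G(v)$, the natural convention excluding $v$ from its own ancestors matches the definition of $Dom_G$.
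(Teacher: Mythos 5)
Your proposal is correct and matches the paper, which gives no separate proof: the corollary is exactly the pointwise restatement of Theorem~\ref{thm:domination} (with the dominator ranging over all vertices), together with the convention that the ancestors of $v$ are the vertices reachable from $v$ along the parent-pointing edges of $D(G)$. Your remark about excluding $v$ itself is the right sanity check and consistent with $v \notin Dom(v)$.
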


The above corollary can be used to improve the algorithm for listing the maximal bicliques, as it is shown in Algorithm~\ref{alg:bicliques}. The correctness of Algorithm~\ref{alg:bicliques} follows from Lemmas \ref{lem:repeated}~and~\ref{lem:low vertex} and Corollary~\ref{cor:domination}.  With respect to its implementation, the domination matrix is represented simply with successors and predecessors lists, $d(v)$ and $dom(v)$ are stored together with $v$, and $R$ is stored in an vertex with $n$ positions, so that each membership query takes $O(1)$ time.

\begin{algorithm}
  \caption{Maximal bicliques of a $C_4$-dominated graph with no triangles.}\label{alg:bicliques}
  \Input{a degree ordered graph $G$ that is $C_4$-dominated and has no triangles nor twins.}
  \Output{a listing, without duplicates, of all the maximal bicliques in $G$.}
  
  \begin{AlgorithmSteps}
    \Step{Compute the domination matrix $D(G)$.}\label{alg:bicliques:D(G)}
    \Step{For each $v$, set $dom(v)$ as the number of ancestors of $v$ in $D(G)$.}\label{alg:bicliques:dom}
    \Step{Set $R := \{v \in V(G) \mid d(w) = dom(v) \text{ and } d(v) = dom(w) \text{, for some } w \in N(v)\}$.}\label{alg:bicliques:R}
    \Step{For every $v \in V(G) \setminus R$, write $\{Dom(v), N(v)\}$.}\label{alg:bicliques:output}
  \end{AlgorithmSteps}
\end{algorithm}

\subsection{Complexity of Algorithm~\ref{alg:bicliques}}

The following observation is useful in the complexity analysis.

\begin{proposition}\label{prop:edges dg}
  If $G$ is a degree ordered graph that is $C_4$-dominated and has no triangles nor twins, then $D(G)$ has at most $2m$ edges more than $S(G)$.
\end{proposition}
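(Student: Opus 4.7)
The plan is to bound directly the number of edges of $D(G)$ absent from $S(G)$. By the construction of $D(G)$, these are exactly the edges $w \to v$ with $w$ degenerated and $v$ in the dominator set of $w$; since a degenerated vertex has no out-neighbors in $S(G)$ at all, every such edge is genuinely new. Writing $z_w = \min N(w)$ whenever $w$ is degenerated with nonempty dominator set, the dominator set has size $d(z_w) - 1$, so $w$ contributes $d(z_w) - 1$ new edges, and the task is to sum this over $w$.

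The crucial step will be a structural claim: for every $z \in V(G)$, at most one degenerated vertex $w$ with nonempty dominator set satisfies $z_w = z$. To prove this, I would first invoke Lemma~\ref{lem:low vertex} to identify the dominator set $N(z_w) \setminus \{w\}$ with $Dom(w)$ (the fact that elements of $N(z)\setminus\{w\}$ are non-adjacent to $w$ comes for free from triangle-freeness, since $z$ would otherwise complete a triangle). Hence $N(z) = \{w\} \cup Dom(w)$. If a second vertex $w'$ shared the pivot $z$, then $\{w\} \cup Dom(w) = N(z) = \{w'\} \cup Dom(w')$, forcing $w' \in Dom(w)$ and $w \in Dom(w')$, and therefore $N(w) = N(w')$. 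This makes $w$ and $w'$ twins, contradicting the hypothesis.

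The conclusion will then follow by a routine count:
\[
  \sum_{w} \bigl(d(z_w) - 1\bigr) \;\le\; \sum_{z \in V(G)} \bigl(d(z) - 1\bigr) \;=\; 2m - n \;\le\; 2m,
\]
where the first sum ranges over degenerated vertices $w$ with nonempty dominator set, and the middle inequality uses the at-most-one-per-pivot claim.

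The main obstacle is identifying the right charging scheme so that the no-twins hypothesis can be leveraged; once the new edges are grouped by their pivot $z$, the degenerated case collapses into a mutual-domination argument that, by the no-twins assumption, can be satisfied by at most one source vertex per pivot. Everything else is just summing degrees.
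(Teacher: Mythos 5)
Your argument is correct and follows essentially the same route as the paper's own proof: both charge each new edge of $D(G)$ to the pivot $z = \min N(w)$ of its degenerated source $w$, use the no-twins hypothesis (via the mutual domination $w \in Dom(w')$ and $w' \in Dom(w)$ forcing $N(w) = N(w')$) to show that distinct degenerated vertices with nonempty dominator sets have distinct pivots, and conclude by summing degrees to obtain the $2m$ bound. The only cosmetic difference is that the paper bounds the total by $\sum_{z \in V(G)} |N(z)| = 2m$ directly, whereas you subtract the $-1$ terms first; both yield the claim.
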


\begin{proof}
  By definition, if $w \to v$ is and edge that belongs to $D(G)$ and does not belong to $S(G)$, then $w$ is a degenerated vertex and $v$ belongs to the dominator set of $w$.  Recall that, by Theorem~\ref{thm:domination}, the dominator set of $w$ is $Dom(w)$, because $w$ has no out-neighbors in $S(G)$.

  Suppose, to obtain a contradiction, that $w_1$ and $w_2$ are degenerated vertices with nonempty dominator sets, and call $z_1 = \min N(w_1)$ and $z_2 = \min N(w_2)$.  If $z_1 = z_2$, then $w_1 \in N(z_2) \setminus w_2 = Dom(w_1)$; analogously, $w_2 \in Dom(w_1)$.  But this contradicts the fact that $w_1$ and $w_2$ are not twins.  Consequently, $\sum\{|Dom(w)| \mid w \in V(G) \text{ and $w$ is degenerated}\} \leq \sum_{z \in V(G)}|N(z)|$, thus $D(G)$ has at most $2m$ edges more than $S(G)$.
\end{proof}

Matrix $D(G)$ can be computed with a five steps procedure, as follows.  First, call Algorithm~\ref{alg:C4 fast} so as to obtain $U(G)$.  Second, traverse each $S \in \MS$ as in Algorithm~\ref{alg:C4 fast} so as to determine whether $S$ is safe or not, and, if $S$ is safe, insert $w(S) \to v(S)$ into $U(G)$ to obtain $S(G)$.  Third, find and mark all the degenerated vertices, by traversing each $w \in V(G)$ while querying whether $w$ has out-neighbors in $S(G)$ and $N(w) \subseteq \MAX(w)$.  Forth, find those degenerated vertices with nonempty $Dom$, by removing the mark to all those vertices that belong to $L(S)$, for some $S \in \MS$ such that $|L(S)| = 1$.  This forth step only requires the traversal of $\MS$.  Finally, $D(G)$ is obtained by inserting an edge $w \to v$ into $S(G)$, for every marked $w$ and every $v \in Dom(w)$.  By Proposition~\ref{prop:edges dg}, $O(m)$ edges are inserted into $S(G)$ in this last step.  So, since the first four steps take $O(\alpha(G)m)$ time, Step~\ref{alg:bicliques:D(G)} of Algorithm~\ref{alg:bicliques} takes $O(\alpha(G)m)$ time.

Step~\ref{alg:bicliques:dom} can be easily implemented so as to run in $O(\alpha(G)m)$ time, by observing that, by Theorem~\ref{thm:domination} $dom(v) = d^+_{D(G)} + \ell(v)$, where $\ell(v)$ is the length of the unique maximal path of $U(G)$ beginning at $v$.  Step~\ref{alg:bicliques:R} clearly takes $O(n+m)$ time.  Finally, Step~\ref{alg:bicliques:output} takes $O(o)$ time, where $o = \sum_{v \in V(G)}{|B(v)|} \leq n^2$ is the size of the output.

As for the space complexity, $D(G)$ is the heaviest structure stored by Algorithm~\ref{alg:bicliques}.  Thus, Algorithm~\ref{alg:bicliques} requires $O(\alpha(G)m)$ space by Proposition~\ref{prop:edges dg}.

\subsection{Some biclique problems}

Theorems \ref{thm:biclique}~and~\ref{thm:domination} can also be used to solve many problems that involve finding a biclique with a certain property.  Just observe that $|Dom(v)|$ can be computed in $O(\alpha(G)m)$ time and $O(n+m)$ space.  Indeed, as in Algorithm~\ref{alg:bicliques}, $|Dom(v)|$ is either the level of $v$ in $U(G)$ plus the number of safe triples $(w,v) \in \MS$ with $v \not\to_{U(G)} w$ (when $v$ is not degenerate) or the number of vertices in its dominator set (when $v$ is degenerate). (Twin vertices are ignored by Theorems \ref{thm:biclique}~and~\ref{thm:domination}; nevertheless, they can be counted as well when $|Dom(v)|$ is being computed.) Therefore, for $k \in \mathbb{N}$, we can test, in $O(\alpha(G)m)$ time and $O(n+m)$ space, whether $G$ has a biclique $(B_1, B_2)$ such that: a) $|B_1| = |B_2| = k$, b) $|B_1| + |B_2| \geq k$, or $|B_1| \cdot |B_2| \geq k$.  These problems are respectively called: a) balanced biclique problem, b) maximum vertex biclique problem, and c) maximum edge biclique problem, and are all NP-complete for general graphs~\cite{Garey1979,PeetersDAM2003}.  Furthermore, problem c) is NP-complete even when $G$ is a bipartite graph~\cite{PeetersDAM2003}.

\section{Concluding remarks and open problems}
\label{sec:remarks}

In this paper we devised an efficient algorithm for the recognition of hereditary biclique-Helly graphs.  The algorithm is strongly dependent on the $C4$ method developed by Chiba and Nishizeki in~\cite{ChibaNishizekiSJC1985}.  The techniques developed by these authors were extended recently by Lin et al.\ in~\cite{LinSoulignacSzwarcfiter2010}.  In the latter paper, these techniques are used to recognize some graph classes that are defined in terms of the true domination relation, such as cop-win graphs and strongly chordal graphs.  A vertex $v$ is \emph{true dominated} by a vertex $w$ when $N(v) \cup \{v\} \subseteq N(w) \cup \{w\}$.  One of the appealing aspects of our algorithm, is that it has to deal with the domination relation.  

In Section~\ref{sec:C6 fast}, we showed how to find an induced $C_6$ in a $C_4$-dominated graph with no triangles.  As part of Algorithm~\ref{alg:C6 fast}, we had to find some vertices that appear at distance at most $3$ from a given vertex $v$.  We showed how to do this in $O(n)$ time, by traversing each vertex outside $N(v)$ at most twice.  However, not all the vertices outside $N(v)$ are at distance $3$ from $G$, unless $G$ is somehow dense.  An open question that follows is, then, can an induced $C_6$ be found in $O(\alpha(G)m)$ time?  Furthermore, is it possible to do find such $C_6$ by using only $O(m)$ space?

In Section~\ref{sec:bicliques}, we develop an algorithm for enumerating the bicliques in $O(o+\alpha(G)m)$ time and $O(\alpha(G) m)$ space.  Algorithm~\ref{alg:bicliques} uses the domination digraph $D(G)$ that encodes all the dominations in $O(\alpha(G)m)$ space.  We can divide the dominations encoded in $D(G)$ into three classes.  Those corresponding to paths in $U(G)$; those corresponding to edges in $S(G) - U(G)$; and those corresponding to edges in $D(G) - S(G)$.  Although there can be as many as $O(\sum_{v \in V(G)}Dom(v)) = O(o)$ dominations of the first class, only $O(n)$ space is used by $D(G)$ to encode this information.  On the other hand, Proposition~\ref{prop:edges dg} guaranties that there are only $O(m)$ dominations of the third class.  Finally, $O(\alpha(G)m)$ bits are used to store the $O(\alpha(G)m)$ dominations second class.  Could it be possible to encode such dominations in a different way (e.g.~as paths), so that only $O(m)$ space is used, without affecting the time complexity of the algorithm?

The somehow opposite question is also interesting.  Instead of implicitly storing each domination of the first class as a path, we can store it explicitly.  That is, we can insert an edge $w \to v$ into $D(G)$ for every path of $U(G)$ beginning at $w$ and ending at $v$.  Such digraph would require $O(o)$ space.  We know that $o < n^2$; however, for $w$ to be dominated by a vertex at distance $k$ from $v$ in $D(G)$, there should be at least $k$ vertices in $N(w) \setminus N(v)$ (always assuming that $G$ is twin-free).  Thus if $D(G)$ has many large paths, then $G$ is dense and $\alpha(G)$ is large.  So, it would be nice to prove or disprove that $o \in O(\alpha(G)m)$, or that $o \in O(m^{3/2})$.

Let $\mathcal{B}$ be the class of graph formed by those graphs $G$ such that $\{B(v)\}_{v \in V(G)}$ is the family of maximal bicliques of $G$.  By Theorem~\ref{thm:biclique}, every $C_4$-dominated graph with no triangles belongs to $\mathcal{B}$.  The converse is false, since the graph in Figure~\ref{fig:ladder with vertex} belongs to $\mathcal{B}$ and it contains a non-dominated $C_4$.  The graphs in $\mathcal{B}$ have at most $n$ bicliques, and Lemma~\ref{lem:repeated} holds for all the graphs in $\mathcal{B}$.  Hence, the simple algorithm that lists $\{B(v)\}_{v \in V(G)}$ by using the domination matrix, implies that the maximal bicliques of the graphs in $\mathcal{B}$ can be enumerated in $O(nm)$ time and $O(n^2)$ space.  Is it possible to list the maximal bicliques of the graphs in $\mathcal{B}$ faster? Is it possible obtain the same time and space bounds of Algorithm~\ref{alg:bicliques} for a larger class of graphs?

\begin{figure}
  \centering \includegraphics{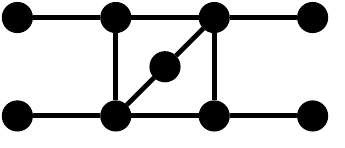}
  \caption{A graph $G$ that contains a non dominated $C_4$, whose family of maximal bicliques is $\{B(v)\}_{v \in V(G)}$.}\label{fig:ladder with vertex}
\end{figure}


\end{document}